\newif\ifCONF
\renewcommand\footnotetextcopyrightpermission[1]{} 
\newcommand{\awae}{\text{awae}}
\newcommand{\twae}{\text{twae}}
\newcommand{\nn}{\nonumber}
\newcommand{\X}{\mathcal{X}}
\newcommand{\Y}{\mathcal{Y}}
\newcommand{\ID}{\textsf{ID}}
\newcommand{\id}{\textsf{id}}
\newcommand{\GG}{\textsf{G}}
\newcommand{\grp}{\mathbb{G}}
\newcommand{\of}{\subseteq}
\renewcommand{\P}{\mathcal{P}}
\newcommand{\unif}{\mathcal{U}}
\newcommand{\N}{\mathcal{N}}
\newcommand{\R}{\mathbb{R}}
\newcommand{\tran}{\text{T}}
\newcommand{\pois}{\text{Pois}}
\renewcommand{\S}{\mathcal{S}}
\DeclareMathSymbol{\mdot}{\mathord}{symbols}{"01}
\newcommand{\norm}[1]{\left\lvert#1\right\rvert}
\newcommand{\lp}[1]{\left(#1\right)}
\newcommand{\lpp}[1]{(#1)}
\newcommand{\lb}[1]{\left[#1\right]}
\newcommand{\lbr}[1]{\left\{#1\right\}}
\newcommand{\ignore}[1]{}
\renewcommand{\Pr}{\textup{Pr}}
\newcommand{\summ}[2]{\sum_{#1}^{#2}}
\newcommand{\Ex}[1]{\text{E}\left[#1\right]}
\newcommand{\Cov}[1]{\text{Cov}\left[#1\right]}
\newcommand{\Var}[1]{\text{Var}\left[#1\right]}
\newcommand{\floor}[1]{\left\lfloor #1 \right\rfloor}
\newcommand{\myquad}[1][1]{\hspace*{#1em}\ignorespaces}
\newcommand{\breakcmd}{\\ &}
\newcommand{\eqanchor}{&}
\newtheorem{definition}{Definition}
\newtheorem{claim}{Claim}
\newtheorem{conjecture}{Conjecture}
\newcounter{casenum}
\newenvironment{caseof}{\setcounter{casenum}{1}}{\vskip.5\baselineskip}
\newcommand{\case}[2]{\vskip.5\baselineskip\par\noindent {{\itshape Case }\arabic{casenum}:} #1#2\addtocounter{casenum}{1}}
\begin{document}

\title{Understanding Information Disclosure from Secure Computation Output: A Study of Average Salary Computation}

\author{Alessandro Baccarini}
\orcid{0000-0002-2028-5440}
\affiliation{%
\institution{University at Buffalo}
\city{Buffalo}
\state{New York}
\country{USA}}
\email{anbaccar@buffalo.edu}

\author{Marina Blanton}
\orcid{0009-0008-9934-2700}
\affiliation{%
\institution{University at Buffalo}
\city{Buffalo}
\state{New York}
\country{USA}}
\email{mblanton@buffalo.edu}

\author{Shaofeng Zou}
\orcid{0000-0002-2821-6941}
\affiliation{%
\institution{University at Buffalo}
\city{Buffalo}
\state{New York}
\country{USA}}
\email{szou3@buffalo.edu}

\begin{abstract}
  Secure multi-party computation has seen substantial performance improvements in recent years and is being increasingly used in commercial products. While a significant amount of work was dedicated to improving its efficiency under standard security models, the threat models do not account for information leakage from the output of secure function evaluation. Quantifying information disclosure about private inputs from observing the function outcome is the subject of this work. Motivated by the City of Boston gender pay gap studies, in this work we focus on the computation of the average of salaries and quantify information disclosure about private inputs of one or more participants (the target) to an adversary via information-theoretic techniques. We study a number of distributions including log-normal, which is typically used for modeling salaries. We consequently evaluate information disclosure after repeated evaluation of the average function on overlapping inputs, as was done in the Boston gender pay study that ran multiple times, and provide recommendations for using the sum and average functions in secure computation applications. Our goal is to develop mechanisms that lower information disclosure about participants' inputs to a desired level and provide guidelines for setting up real-world secure evaluation of this function.
\end{abstract}

\ifCONF
\begin{CCSXML}
<ccs2012>
<concept>
<concept_id>10002978.10002979.10002984</concept_id>
<concept_desc>Security and privacy~Information-theoretic techniques</concept_desc>
<concept_significance>500</concept_significance>
</concept>
<concept>
<concept_id>10002950.10003712</concept_id>
<concept_desc>Mathematics of computing~Information theory</concept_desc>
<concept_significance>500</concept_significance>
</concept>
<concept>
<concept_id>10002978.10003006.10011608</concept_id>
<concept_desc>Security and privacy~Information flow control</concept_desc>
<concept_significance>300</concept_significance>
</concept>
</ccs2012>
\end{CCSXML}
  
  \ccsdesc[500]{Security and privacy~Information-theoretic techniques}
  \ccsdesc[500]{Mathematics of computing~Information theory}
  \ccsdesc[300]{Security and privacy~Information flow control}

\keywords{secure function evaluation, information disclosure, entropy, average salary computation}
\else \fi

\maketitle

\section{Introduction}

Secure multi-party computation and other forms of computing on cryptographically protected data (such as homomorphic encryption) open up possibilities for great utilization and analysis of private data distributed across different domains, which otherwise might not be feasible due to the sensitive nature of the data. For example, analysis of health-related records and medical images distributed across different medical facilities and extracting cues from them lead to medical advances 
without the need to see the records themselves. Today, data analysis practices by researchers are hindered by laws regulating access to health data in different countries and analyzing medical data at scale presents challenges. Similarly, analyzing sensitive data such as salaries to understand disparities by gender, race, or other types of marginalization can supply decision makers with important information and empower them to address the disparities. This was the case with the Boston area gender pay gap surveys~\cite{bwwc2016,bwwc2017,lap16,lap16b,lap18} that initiated in 2015 and ran through 2017 with more participants and data analysis by additional categories including race. More broadly, wider adoption of privacy-preserving technologies, and secure computation in particular, can lead to higher security standards and practices for a broad range of different aspects of our society.

Secure computation techniques have seen significant advances in recent decades in terms of their speed, as well as availability of implementations and tools to facilitate their use for a variety of applications. Tech giants such as Google and Apple started using secure computation techniques in their products~\cite{google-ads,walker2019,ion2020,bhowmick2021apple} and the number of start-up companies offering related products is growing (see, e.g., \cite{partisia,inpher,ligero,nthparty}). However, a number of fundamental questions still need to be addressed by the research community in order to make secure computing practices common place. 

One of the fundamental questions is how much information about a participant's private input(s) might be available as a result of evaluating a desired function on private inputs. Standard security definitions adopted in the cryptographic community require that no information about private inputs is disclosed during function evaluation. That is, given a function $f$ that we evaluate on private inputs $x_1, x_2, \ldots$ coming from different sources, security is achieved if a participant does not learn more information than the function output and any information that can be deduced from the output and its private input. However, there are no constraints on types of functions that can be evaluated in this framework, and thus the information a participant can deduce from the output and its private input about another participant's private input is potentially large. This problem is typically handled by assuming that the function being evaluated is agreed upon by and acceptable to the data owners as not to reveal too much information about private inputs. However, our ability to evaluate functions in this aspect and determine what functions might be acceptable is currently limited. This question is the subject of this work.

Intuitively, what we want is to guarantee that the function being evaluated on private data is non-invertible, i.e., observing the output does not reveal its private input. Cryptography uses the notion of one-way functions -- and assumes this property for hash functions -- to model non-invertibility. However, what is needed in this case is to ensure that the possible space for a target private input is still large after the adversary observes the result of function evaluation. This notion of non-invertibility was first used in the context of secure multi-party computation in solutions for business applications such as supply chain management and component procurement~\cite{deshpande2005secure_tr,deshpande2005secure,deshpande2011} and was formulated as the inability to narrow down the (private) input of another party to a single value or a small set of possible values. 
Consequently, a series of publications by Ah-Fat and Huth~\cite{ah2017secure,ah2019optimal,ah2020protecting,ah2020two} put forward formal definitions that use entropy to measure the amount of uncertainty about one or more participants' private inputs after using them in secure multi-party computation. The definitions are framed from a) an attacker's perspective who aims to maximize information disclosure of a target's private input and b) from a target's perspective who determines the maximum information disclosure about their inputs when deciding whether to contribute their inputs to secure evaluation of a particular function. The above formulations are general and applicable to any function, while application-specific formulations of what constitutes sufficient input protection and function non-invertibility also emerged. One example is building machine learning models resilient to membership inference attacks~\cite{shokri2017membership,song2021systematic} that guarantee that it is infeasible to determine whether someone's data was used for training the model.

\ifCONF \else \medskip \noindent \fi
\textbf{Our contributions.}
In this work, we use the entropy-based definitions from~\cite{ah2017secure} as our starting point and analyze a specific function of significant practical relevance. In particular, we focus on the case of average salary computation as used in the Boston gender pay gap study~\cite{lap16}. When the total number of inputs is known (which is typically the case), the average computation is equivalent to computing the sum. We intuitively understand that the larger the number of inputs used in the computation of the average is, the better protection each individual contributing its input obtains. In the extreme case of two participants\footnote{We use the term ``participants'' to denote parties contributing inputs to the computation. The computation itself can be performed by a different set of parties, but our result is independent of the mechanism used to realize secure function evaluation.} no protection can be achieved. This was understood by the designers of the Boston gender pay gap study who recommended running the computation with at least 5 contributors~\cite{lap16b}. However, the information disclosure was not quantified, which we remedy in this work.  

We start by analyzing the function itself and formally show that the amount of information an attacker learns is independent of his/her own inputs. This is consistent with our intuition that, given a sum, one can always remove their contribution to the sum and analyze the resulting value. Thus, the protection depends on the number of spectators, i.e., input parties distinct from those controlled by the adversary and the party or parties being targeted. 

We analyze the target's input entropy remaining after participating in the computation (and consequently the entropy loss as a result of participation) for a number of discrete and continuous distributions including uniform, Poisson, normal (Gaussian), and log-normal. Log-normal is typically used for modeling salary data, but is the least trivial to analyze. An unexpected finding of our analysis is that for a given distribution, the absolute entropy loss is normally independent of the distribution parameters and the absolute entropy loss remains very close for different distributions as we vary the number of participants/spectators. Quantifying the information loss allows us to devise a mechanism to lower information disclosure to any desired level (e.g., 1\% of original entropy, 0.05 bits of entropy, etc.).

We extend our analysis of information loss to the case when the computation is run more than once (as was the case for the Boston gender pay gap study) and examine the case with two evaluations. This corresponds to (i) the target participating in two computations with the same input where the set of participants differs between the executions and (ii) the target participating in one computation, where the other is run without the target's input. We observe that information loss increases as a result of multiple computations, regardless of whether the target participates once or twice. Furthermore, the protection is maximized when one half of the original contributors are replaced, i.e., 50\% of the initial participants remain and the other 50\% are replaced with new participants. Our multi-execution analysis is based on the normal distribution, but we expect the outcome to be similar for other distributions as well.

\ifCONF
We provide additional proofs and generalize our analysis to three and more executions in the full version of the text~\cite{baccarini2022understanding}.
\else 
We proceed with generalizing our analysis to three and more, $M$, executions.
\fi
An interesting finding is that the best configuration that minimizes information loss is determined by pairwise overlaps of participants between the executions and not by other parameters and sizes. This allows us to determine optimal setup for a single and repeated execution of the average function.

We empirically validate our findings throughout this work and provide recommendations for securely evaluating the average function in real world applications. In particular, in all of our experiments the cost of participating in the average computation, i.e., the difference in the entropy before and after the computation is a fraction of a bit (for both Shannon entropy used with discrete distributions and differential entropy used with continuous distributions). This translates to small relative entropy loss in practice. When modeling salary data using log-normal distribution with the parameters specifically chosen for salaries~\cite{cao2022priori}, the entropy loss is below 5\% with at least 5 non-adversarial participants or spectators and achieving 1\% entropy loss requires 24 spectators. These numbers are also surprisingly similar across different distributions. Furthermore, when the computation is repeated (we use a normal distribution to adequately approximate the log-normal setup), engaging in the computation the second time with an overlapping set of 50\% participants whose inputs do not change results in only 30\% entropy loss of the first participation. These and other findings lead to a number of recommendations for running this computation in practice, which we provide at the end of this work.

\ifCONF \else \medskip \noindent \fi
\textbf{On the choice of metric.} Our analysis uses Shannon entropy. One might argue that this is not the best metric because it does not distinguish between, e.g, leaking the least significant vs. most significant bit of one's salary, while learning the latter is much more valuable to an adversary than learning the former. 
However, as we show throughout this work, information leakage for this application is always small regardless of the setup. In particular, the most favorable for the adversary setup across all distributions discloses only about 0.7 bits of entropy, i.e., the adversary cannot learn even a single bit of target's salary.
Furthermore, we derive effective mechanisms for reducing information loss to a controlled low level such that the worst case scenario will not realize.
We conduct similar analysis using min-entropy 
\ifCONF
in the full version of the paper~\cite{baccarini2022understanding}
\else 
(see Section~\ref{sub:min_entropy_analysis}) 
\fi 
and show that Shannon entropy trends are consistent with those for min-entropy. A primary advantage of using Shannon entropy is that we are able to go much further in our analysis and ultimately derive close-form expressions, which cannot be accomplished for other metrics.

\section{Related Work}
\label{related_works}

In what follows, we review prior literature on information disclosure from function output in the context of computing on private data and related techniques that limit information disclosure.

\subsection{Quantitative Information Flow}
\label{sub:quantitative_information_flow}

The field of \emph{quantitative information flow} is closely related to our work.
Denning~\cite{denning1982cryptography} is credited as the first to quantify information flow as a measure of the interference between variables at two stages during a program's execution (typically denoted by ``high-'' and ``low-security'' variables, which equates to the target's inputs and output in our setting, respectively).
Smith~\cite{smith2009foundations} formally established the foundations of quantifying the information leakage under the threat model that an attacker can recover a secret in one attempt (denoted by the notion of \emph{vulnerability}). It has been shown by Massey~\cite{massey1994guessing} that the Shannon entropy cannot capture this information under the guessing assumption, and Smith recommends min-entropy in its place. 
Alvim~et~al.~\cite{m2012measuring} generalized the min-entropy into the  $g$-leakage to incorporate gain functions to model the \emph{benefit} an adversary gains from making guesses about the secret. 
Subsequent works encompassed variations on the $g$-leakage~\cite{alvim2014additive}.
Other works in differential privacy 
feature derivations of leakage bounds~\cite{clark2002quantitative}, leakage analysis in the case of an adaptive adversary~\cite{kopf2011automatically}, and knowledge-based approaches for measuring risk~\cite{mardziel2012knowledge,rastogi2013knowledge}.

The fundamental advantage of our Shannon-based approach is the ability to derive closed-form expressions for the information leakage of the average salary computation, while other metrics do not share this characteristic. For example, the chain rule of entropy (a simple, yet critical component of our analysis) is not satisfied if min-entropy is used \cite{iwamoto2013information,skorski2019strong} in place of Shannon entropy. Our reductions would no longer hold, and we would be forced to resort to complete enumeration or approximation methods to compute the entropy. 
However, in 
\ifCONF 
the full version 
\else
Section~\ref{sub:min_entropy_analysis}
\fi
we provide supplementary analysis that demonstrates similarities between Shannon entropy and min-entropy based analyses. We also remain open to evaluating other metrics in the future.

An additional distinction between our work and existing literature on (quantitative) information flow is that we do not consider possible leakage from intermediate aspects of a computation's execution. Whereas other works may examine a program's loops~\cite{mardziel2012knowledge}, side-channel vectors~\cite{kopf2011automatically}, or inter-dependent structures~\cite{alvim2014not}, we strictly investigate the relationship between the output and target's input, since function itself is assumed to be evaluated using secure multi-party protocols.

\subsection{Function Information Disclosure} 
\label{sub:Information_Loss}

Existing literature on information leakage from the output of a secure function evaluation is limited, relative to the rest of the field of secure computation. Secure multi-party protocols are designed to guarantee no information is disclosed throughout a computation, but do not ensure input protection after the output is revealed. The work of Deshpande~et~al.~\cite{deshpande2005secure,deshpande2005secure_tr,deshpande2011} was pioneering in that respect and designed secure multi-party protocols for business applications that ensured that the function being evaluated is \emph{non-invertible}, i.e., no participant can infer other participants' inputs from the output. A trivially invertible example is the average salary calculation between two individuals, since either party can recover the other's input exactly. Deshpande~et~al.~\cite{deshpande2005secure,deshpande2005secure_tr} first addressed non-invertibility in the context of secure supply chain processes. The proposed protocols offered protection from inference of future inputs to a repeated calculation after a result is disclosed.  A later work by Deshpande et~al.~\cite{deshpande2011} achieved non-invertibility for a framework designed for secure price masking for outsourcing manufacturing.
The authors argued information leakage was minimal by analyzing mutual information between correlated normal random variables, but did not consider other distributions or entropy metrics.

Ah-Fat~and~Huth~\cite{ah2017secure} provided the first in-depth analysis of information leakage from the outputs of secure multi-party computations.
The authors formalized two metrics to measure expected information flow from the attacker's and target's perspectives, namely, the \emph{attacker's weighted average entropy} (awae) and  \emph{target's weighted average entropy} (twae), respectively. Participants' inputs are modeled using probability distributions and
were specified to be uniform, but this constraint can be relaxed. 
The inherent difficulty of this entropy-based approach is the requirement to enumerate every possible input combinations from all parties, which scales poorly as the input space and number of participants grow. We utilize their definitions for our analysis and demonstrate their utility to computation designers to determine potential disclosure about participants' inputs

This model was expanded in~\cite{ah2019optimal} to encompass the R\'enyi, min-, and $g$-entropy. The extension is presented in combination with a technique for distorting secure computation outputs to limit information disclosure from the output and achieve balance between accuracy and privacy.
This was further developed in~\cite{ah2020protecting} with a fuzzing method based on randomized approximations. A closed-form expression for the min-entropy of a two- and three-party auction was derived in~\cite{ah2020two}, alongside a conjecture for the case with an arbitrary number of parties.

Conceptually, the notion of \emph{output privacy} is related to our work. 
The terminology was introduced in the field of data mining~\cite{bu2006preservation,wang2011output,kotecha2017preserving,mendes2017privacy,monreale2016privacy}, with the goal of 
designing techniques to protect inputs from inference attacks on the output model.
Information about the inputs that can be obtained from the output includes, but is not limited to, properties which can be uniquely attributed to a small number of input participants. Conventional approaches for minimizing disclosure involve applying transformations on the result via monotonic functions~\cite{bu2006preservation} or even proactive learning~\cite{wang2011output}. These techniques have little to no impact on the result of the computation. This direction differs from our work since the type of disclosure they aim to rectify is not quantified.

There is also literature that uses specific formulations to demonstrate that computation does not disclose sensitive information about participants. This includes resilience to \emph{membership inference attacks}~\cite{shokri2017membership,song2021systematic} in the context of machine learning training and \emph{differential privacy}~\cite{dwork2008differential,dwork2014algorithmic} for statistical databases. In particular, differential privacy ensures the output of a query is negligibly dependent on a single individual's record in the database and resilience to membership inference attacks prevents one from determining whether a specific individual's data was used for model training. These concepts have no direct relationship to our work, aside from designing mechanisms for lower information disclosure as a result of computation on private data. 
In this work, we do so by varying the number of participants in the computation, while other methods augment the function directly to produce a differentially private output.

\section{Preliminaries}
\label{sec:definitions}

Following \cite{ah2017secure}'s notation, let $P$ denote the set of all participants in a computation with $\norm{P} = m$. All participants $P$ are partitioned into three groups: parties controlled by an attacker $A \subset P$, a group of parties being targeted $T \of P \setminus A$, and the remaining participants called spectators $S = P \setminus (A \cup T)$. 
Let the random variable $X_{P_i}$ correspond to the input of a single participant $P_i$ and $x_{P_i}$ denotes a value that $X_{P_i}$ takes. In addition, the notation $\vec{X}_P = \lp{X_{P_1}, \dots, X_{P_m}}$ denotes a multidimensional random variable and $\vec{x}_P$ is a vector of the individual values of the same size. 
We also let $X_P = \sum_{i}X_{P_i}$ define a new random variable representing the sum of the participants' random variables. The same notation applies to the sets $A$, $T$, and $S$. 
Our present analysis is based upon the assumption that all participants' inputs are independent and identically distributed, which we consequently relax.

For discrete distributions, we use Shannon entropy $H(X)$ to measure the information of a discrete random variable $X$ with mass function $\Pr(X = x)$ defined over domain $D_X$. Specifically, 
\begin{align*}
  H(X) = - \sum\nolimits_{x\in D_X} \Pr(X = x) \cdot \log \Pr(X = x),
\end{align*}
where all logarithms are to the base 2. If we are dealing with continuous distributions, we shift to the differential entropy $h(X)$ 
with density function $f(x)$ over the support set $\X$, defined as 
\begin{align*}
    h(X) = - \int_\X f(x) \log f(x) dx.
\end{align*}
We study information leakage of the computation of the average:
\begin{align*}
  o = f(\vec{x}_A,\vec{x}_T,\vec{x}_S) = \frac{1}{m} \lp{\sum\nolimits_{i} x_{T_i} + \sum\nolimits_{j}x_{A_j} + \sum\nolimits_{k} x_{S_k}},
\end{align*}
where $o$ denotes the output of the function. We model the output $o$ by the random variable $O$ defined over the domain $D_O$, namely
\begin{align*}
  O = \frac{1}{m} \lp{\sum\nolimits_{i} X_{T_i} + \sum\nolimits_{j}X_{A_j} + \sum\nolimits_{k} X_{S_k}}.
\end{align*}
The $1/m$ factor can be ignored in the final expression since the number of participants is typically known by all parties and can trivially be removed from the output. We omit it throughout the remainder of the paper. 

In this work, we consider distributions where the sum of independent individual random variables is well studied and their mass or density functions have closed-forms expressions or can be reasonably approximated. This includes the following distributions:
\begin{itemize}
  \item \emph{Discrete uniform} $\unif \lp{a,b}$, where $a$ and $b$ are integers corresponding to the minimum and maximum of the range of the support set $\lbr{a, a+1, \dots, b-1, b}$. 
  \item \emph{Poisson}  $\pois \lp{\lambda}$, where  $\lambda \in \R_{>0} $ is the shape parameter that indicates the expected (average) rate of an event occurring over a given interval.
  \item \emph{Normal (Gaussian)}  $\N \lp{\mu, \sigma^2}$, where $\mu \in \R$ and $\sigma^2 \in \R_{>0}$ correspond to the mean and squared standard deviation, respectively.
  \item \emph{Log-normal} $\log\N \lp{\mu, \sigma^2}$ with parameters $\mu \in \R$ and $\sigma^2 \in \R_{>0}$, which correspond to the mean and squared standard deviation of the random variable's natural logarithm.
\end{itemize}
$X \sim {\sf Dist}$ indicates that random variable $X$ has distribution $\sf Dist$.

As stated earlier, Ah-Fat and Huth~\cite{ah2017secure} provided multiple information-theoretic measures to quantify information disclosure after a function evaluation, which we use here:
\begin{definition}[\cite{ah2017secure}]
  The joint weighted average entropy (\textup{jwae}) of a target $T$ attacked by parties $A$ is defined over all $\vec{x}_A \in D_A$ and $\vec{x}_T \in D_T$ as
  
\begin{align*}
 \textup{jwae}(\vec{x}_A,\vec{x}_T)
      &=\sum\nolimits_{o\in D_O}\Pr(O=o\mid \vec{X}_A=\vec{x}_A,\vec{X}_T=\vec{x}_T)\ifCONF \breakcmd \myquad[4]\else\fi \cdot H(\vec{X}_T\mid \vec{X}_A=\vec{x}_A,O=o).
\end{align*}

\end{definition}
\noindent This metric measures the information an attacker would learn (on average) about the target when the input vectors are \(\vec{x}_A\) and \(\vec{x}_T\). One can subsequently define the average of the jwae over all possible \(\vec{x}_T\) or \(\vec{x}_A\) vectors weighted by their respective prior probabilities.
\begin{definition}[\cite{ah2017secure}]
  \label{def:twae}
  The target's weighted average entropy (\textup{twae}) of a target $T$ attacked by parties $A$ is  defined for all  $\vec{x}_T \in D_T$ as
  \begin{equation*}
    \textup{twae}(\vec{x}_T)=\sum\nolimits_{\vec{x}_A\in D_A}\Pr(\vec{X}_A=\vec{x}_A) \cdot \textup{jwae}(\vec{x}_A,\vec{x}_T).
  \end{equation*}
\end{definition}
\noindent The twae informs a target how much information an attacker can learn about its input when the input is $\vec{x}_A$.
\begin{definition}[\cite{ah2017secure}]
  \label{def:awae}
  The attacker's weighted average entropy (\textup{awae}) of a target $T$ attacked by parties $A$ is  defined for all  $\vec{x}_A \in D_A$ as
  \begin{equation*}
    \textup{awae}(\vec{x}_A)=\sum\nolimits_{\vec{x}_T\in D_T}\Pr(\vec{X}_T=\vec{x}_T) \cdot \textup{jwae}(\vec{x}_A,\vec{x}_T).
  \end{equation*}
\end{definition}
\noindent The awae informs an attacker about how much information it can learn about the target's input when the attacker's input vector is \(\vec{x}_A\). The attacker can consequently compute the awae on all values in $D_A$ to determine which input maximizes the information learned about the target's input (and thus what should be entered into the computation). Using the definition of jwae, it follows that: 
\begin{align*}
	\awae(\vec{x}_A)&=\sum_{\vec{x}_T\in D_T}\Pr(\vec{X}_T=\vec{x}_T)\sum_{o\in D_O}\Pr(O=o{\mid} \vec{X}_A=\vec{x}_A,\vec{X}_T=\vec{x}_T) \ifCONF \breakcmd\myquad[12] \else\fi \cdot H(\vec{X}_T\mid \vec{X}_A=\vec{x}_A,O=o)\nn\\
	&=\sum\nolimits_{\vec{x}_T\in D_T}\sum\nolimits_{o\in D_O}\Pr(O=o,\vec{X}_T=\vec{x}_T\mid \vec{X}_A=\vec{x}_A)\ifCONF \breakcmd\myquad[6]  \else\fi  \cdot H(\vec{X}_T\mid \vec{X}_A=\vec{x}_A,O=o).
\end{align*}
Since $\vec{X}_T$ is independent of $\vec{X}_A$, we derive that awae equals to conditional entropy:
\begin{align*}
	\awae(\vec{x}_A) &=\sum_{o\in D_O}\Pr(O=o\mid \vec{X}_A=\vec{x}_A)\cdot H(\vec{X}_T\mid \vec{X}_A=\vec{x}_A,O=o)
  \ifCONF \breakcmd \else\fi
 = H(\vec{X}_T\mid \vec{X}_A=\vec{x}_A,O)
\end{align*}
where the last equality is due to the definition of conditional \mbox{entropy.}

\section{Single Execution} 
\label{sec:single_calculation}

In this section we analyze a single execution of the average function on private inputs and the associated information disclosure of the target's inputs. 
Recall that the computation 
is modeled by 
\begin{equation}
    O=f(\vec{X}_A,\vec{X}_T,\vec{X}_S) = \sum\nolimits_{i} X_{T_i} + \sum\nolimits_{j}X_{A_j} + \sum\nolimits_{k} X_{S_k},\end{equation}
and we let $n = \norm{S}$ denote the number of spectators.

\begin{figure}[t]
      \centering
    \ifCONF
      \includegraphics[width=0.33\textwidth]{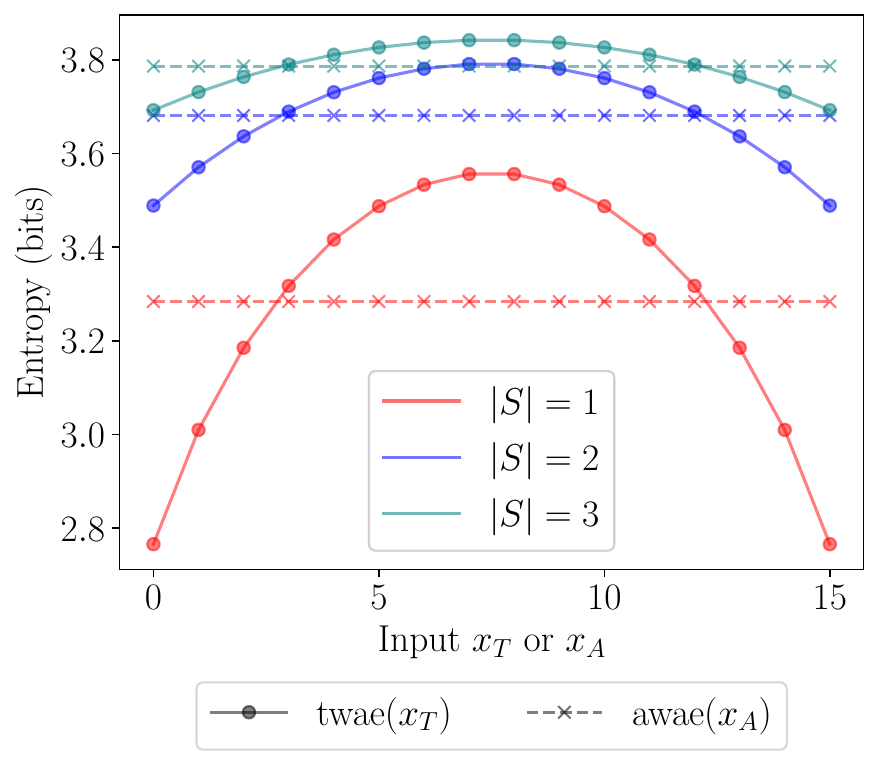}
      \else
      \includegraphics[width=0.4\textwidth]{twae_vs_awae.pdf}
     \fi 
    \caption{The $\twae(\vec{x}_T)$ and $\awae(\vec{x}_A)$ using inputs over $\unif \lp{0,15}$ with a different number of spectators $\norm{S}$.}
    \label{fig:twae_vs_awae}
\end{figure}

As a first step, we plot the values of awae and twae for our function of interest. Figure~\ref{fig:twae_vs_awae} illustrates these values with a single adversarial participant, a single target and a varying number of spectators (1--3). All inputs follow the uniform distribution $\unif \lp{0,15}$. Calculating the twae and awae values using Definitions~\ref{def:twae}~and~\ref{def:awae} requires enumerating all input and output combinations. This quickly becomes computationally inefficient as the input space grows.
  
Each participant, acting as a target, can utilize the twae prior to the computation to determine how much information an attacker can learn (on average) from the output for a specific input that the participant enters into the computation. 
As the figure illustrates, the target's remaining average entropy is maximized when the input is in the middle of the range, indicating that those values have better protection than inputs near the extrema. As the number of spectators increases, the curves shift upwards, i.e., the uncertainty about the target's input increases and the gap in the uncertainty between different input values reduces.

The awae, on the other hand, gives an adversary the ability to determine which input to enter into the computation that leads to maximum information disclosure about a target's input (without knowing what input the target used). As displayed in the figure, the adversarial knowledge does not change by varying its inputs into the computation. This is consistent with our intuition that, given the output, the adversary can remove their contribution to the computation and possess information about the sum of the inputs of the remaining parties. We formalize this as the following 
result:
\begin{claim} \label{claim1}
    $\textup{awae}(\vec{x_A})$  is independent of attacker's input vector $\vec{x_A}$.
\end{claim}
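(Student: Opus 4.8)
The plan is to start from the simplified expression for the awae that the excerpt has already derived, namely $\awae(\vec{x}_A) = H(\vec{X}_T \mid \vec{X}_A = \vec{x}_A, O)$, and to show that this conditional entropy carries no dependence on $\vec{x}_A$. The guiding observation is that once the attacker fixes $\vec{X}_A = \vec{x}_A$, the aggregate $x_A = \sum_j x_{A_j}$ is a known constant, so learning $O = o$ is informationally equivalent to learning $X_T + X_S = o - x_A$. I would make this precise by noting that the two conditioning events $\{\vec{X}_A = \vec{x}_A, O = o\}$ and $\{\vec{X}_A = \vec{x}_A,\, X_T + X_S = o - x_A\}$ coincide, since $O = X_T + X_A + X_S$ forces $X_T + X_S = o - x_A$ whenever $O = o$ and $X_A = x_A$.

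First I would rewrite the per-output conditional distribution and invoke the independence of the three input groups assumed in the preliminaries. For each $o \in D_O$ and each $\vec{x}_T \in D_T$,
\begin{align*}
\Pr(\vec{X}_T = \vec{x}_T \mid \vec{X}_A = \vec{x}_A, O = o)
&= \Pr(\vec{X}_T = \vec{x}_T \mid \vec{X}_A = \vec{x}_A,\, X_T + X_S = o - x_A) \\
&= \Pr(\vec{X}_T = \vec{x}_T \mid X_T + X_S = o - x_A),
\end{align*}
where the second equality drops the conditioning on $\vec{X}_A$ because $\vec{X}_A$ is independent of $(\vec{X}_T, \vec{X}_S)$. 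It follows that each inner term obeys $H(\vec{X}_T \mid \vec{X}_A = \vec{x}_A, O = o) = H(\vec{X}_T \mid X_T + X_S = o - x_A)$, which depends on $o$ and $\vec{x}_A$ only through the difference $o - x_A$.

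Next I would reindex the outer average by the substitution $u = o - x_A$. For fixed $\vec{x}_A$ the map $o \mapsto o - x_A$ is a bijection from $D_O$ onto the support of $X_T + X_S$, and $\Pr(O = o \mid \vec{X}_A = \vec{x}_A) = \Pr(X_T + X_S = o - x_A)$ again by independence, so both the weights and the inner entropies reindex cleanly:
\begin{align*}
\awae(\vec{x}_A)
&= \sum_{o \in D_O} \Pr(O = o \mid \vec{X}_A = \vec{x}_A) \cdot H(\vec{X}_T \mid X_T + X_S = o - x_A) \\
&= \sum_{u} \Pr(X_T + X_S = u) \cdot H(\vec{X}_T \mid X_T + X_S = u) \\
&= H(\vec{X}_T \mid X_T + X_S).
\end{align*}
The final expression makes no reference to $\vec{x}_A$, which is exactly the claim.

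I expect the only delicate point to be the justification for replacing the conditioning on $\{\vec{X}_A = \vec{x}_A, O = o\}$ by conditioning on $X_T + X_S = o - x_A$ alone, i.e., legitimately stripping away the $\vec{X}_A = \vec{x}_A$ conditioning after the translation; this rests entirely on the mutual independence of the attacker, target, and spectator inputs, so I would state that assumption explicitly at the outset. The continuous-distribution case is handled identically after replacing mass functions with densities, the Shannon entropy with the differential entropy $h(\cdot)$, and the outer sum with an integral, since the translation $o \mapsto o - x_A$ leaves both the density of $X_T + X_S$ and the differential entropy invariant.
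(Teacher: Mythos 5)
Your proof is correct, but it takes a different route from the paper's. You argue at the level of the posterior distribution itself: you show that conditioning on $\{\vec{X}_A=\vec{x}_A, O=o\}$ is equivalent (by mutual independence of the input groups) to conditioning on $X_T+X_S = o-x_A$ alone, so the per-output entropy depends on $(\vec{x}_A,o)$ only through $o-x_A$, and the outer weighted sum reindexes under $u=o-x_A$ to give $H(\vec{X}_T\mid X_T+X_S)$. The paper instead manipulates entropy quantities directly via the chain rule, writing
\begin{align*}
H(\vec{X}_T\mid \vec{X}_A=\vec{x}_A,O) = H(\vec{X}_T) + H(X_S) - H(X_T+X_S),
\end{align*}
where the conditioning on the fixed $\vec{x}_A$ is dropped from each term by independence and translation invariance of entropy. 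Both arguments rest on the same two facts (independence of the attacker's inputs from the rest, and shift-invariance of entropy), and both terminate at $H(\vec{X}_T\mid X_T+X_S)$. Your version proves something slightly stronger and arguably more illuminating --- the entire posterior on $\vec{X}_T$, not merely its entropy, is independent of $\vec{x}_A$ --- whereas the paper's version is more compact and produces the additive decomposition $H(\vec{X}_T)+H(X_S)-H(X_T+X_S)$, which it reuses later (e.g., in the proof that the absolute entropy loss for normal inputs depends only on $|T|$ and $|S|$). One cosmetic point: your claim that $o\mapsto o-x_A$ is a bijection from $D_O$ onto the support of $X_T+X_S$ is not literally true if $D_O$ is the unconditional output domain, but the extra points carry zero conditional probability and contribute nothing to the sum, so the reindexing goes through.
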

\newcommand\proofone{
    According to the chain rule of entropy which states that $H(X,Y) = H(X\mid Y) + H(Y)$~\cite[Chapter~2.5]{thomas2006elements}, we have that:
    \begin{align}
        H(\vec{X}_T \mid \vec{X}_A  \ifCONF  \else\fi{=}\vec{x}_A,O)  \ifCONF\nn \eqanchor \nn \else \eqanchor\fi
         = H(\vec{X}_T,O\mid \vec{X}_A=\vec{x}_A)-H(O\mid \vec{X}_A=\vec{x}_A)\nn\\
                         & =H(\vec{X}_T\mid \vec{X}_A=\vec{x}_A) + H(O\mid \vec{X}_T,\vec{X}_A=\vec{x}_A)\ifCONF\nn \breakcmd \nn\myquad[2] \else\fi- H(O\mid \vec{X}_A=\vec{x}_A)\nn           \\
                         & =H(\vec{X}_T) {+} H\left(\sum\nolimits_{i} X_{S_i}\right) {-} H\left(\sum\nolimits_{i} X_{T_i} {+} \sum\nolimits_{j} X_{S_j}\right),
                          \ifCONF \nn \else\fi
                         \label{eq:claim1_identical}
    \end{align}
    which is independent of $\vec{x_A}$.
    }
\begin{proof}
    \proofone
    \end{proof}
    \noindent
    Using our notation from Section~\ref{sec:definitions}, the above expression for $\textup{awae}(\vec{x_A})$  simplifies to
    \ifCONF \begin{equation*}
    \else \begin{equation} \label{eq:claim1_followup}
    \fi
        H(\vec{X}_T)+H\left(X_S\right) -H\left(X_T+ X_S\right)
        =H(\vec{X}_T \mid X_T + X_S).
    \ifCONF \end{equation*}
    \else \end{equation}
    \fi

The next step is to determine which measure (awae or twae) we should use in our analysis of the average salary computation. Ah-Fat and Huth~\cite{ah2017secure} argued that the awae served as a more precise metric for measuring information leakage of a secure function evaluation than twae for their choice of function and used awae in their subsequent work~\cite{ah2019optimal}. Our perspective also aligns with that conclusion. In particular, while the twae informs the target of the amount of information leakage for the input they possess, the target may not be technically savvy enough to be able to apply the metric and make an informed decision regarding computation participation (plus, the choice to participate or not participate can leak information about their input). Perhaps more importantly, a function needs to be analyzed by the computation designers in advance and without access to the inputs of future computation participants to determine a safe setup for the participants. Thus, the available mechanism for this purpose is the attacker's perspective or awae, and we focus on this metric in the rest of this work.

Based on the above, in what follows 
we use $H(\vec{X}_T \mid X_T + X_S)$ to measure the leakage, and the simplified function is
\begin{align*}
    f(\vec{X}_T,\vec{X}_S) =  \sum\nolimits_{i} X_{T_i} + \sum\nolimits_{j}X_{S_j}= X_T + X_S.
\end{align*}
This refines the parameters we can vary in our analysis to
(1) the number of participants in the target and spectators groups and 
(2) the types of distributions and statistical parameters of the inputs.
Furthermore, the computational difficulty associated with directly computing the awae is absent when using $H(\vec{X}_T \mid X_T + X_S)$. Instead, the computation simplifies to calculating the entropy of sums of random variables. 
We examine the behavior of the conditional entropy for several characteristic probability distributions next.

\subsection{Discrete Distributions} 
\label{sub:Discrete_Distributions}
\newcommand\discfig{
\begin{figure*}[t] \centering
    \begin{subfigure}[t]{0.32\textwidth} \centering
        \includegraphics[width=0.99\textwidth]{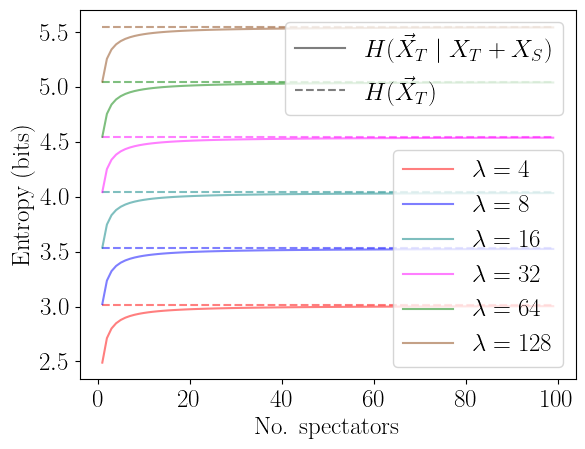}
        \caption{Target's entropy before $H(\vec{X}_T)$ and after $H(\vec{X}_T \mid X_T + X_S)$ the execution.}
        \label{subfig:sum_poisson_multiple_exps}
    \end{subfigure}
    \begin{subfigure}[t]{0.32\textwidth} \centering
        \includegraphics[width=0.99\textwidth]{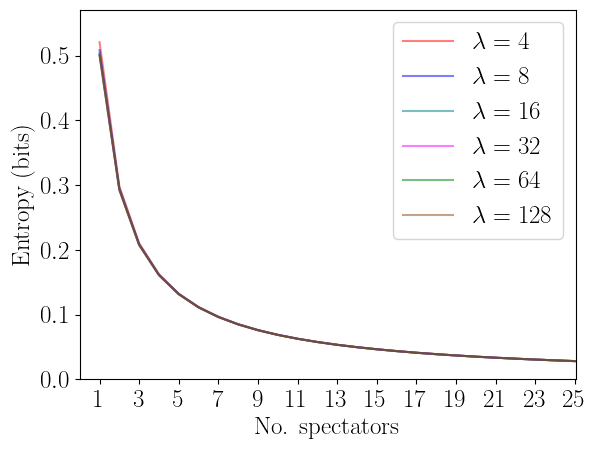}
        \caption{Target's absolute entropy loss $H(\vec{X}_T) - H(\vec{X}_T \mid X_T + X_S)$.}
        \label{subfig:sum_poisson_absolute_loss}
    \end{subfigure}
    \begin{subfigure}[t]{0.32\textwidth} \centering
        \includegraphics[width=0.99\textwidth]{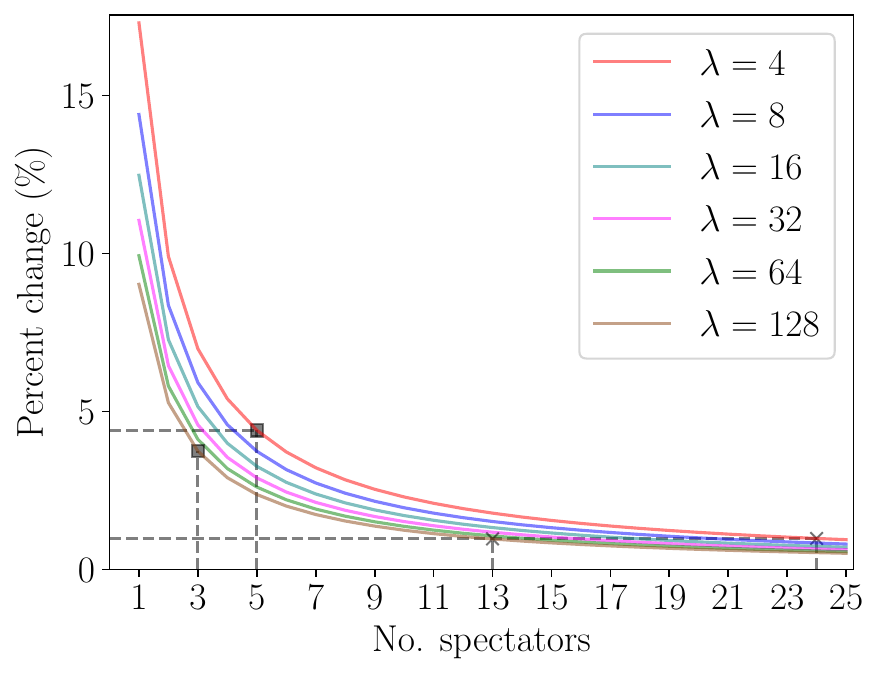}
        \caption{Target's relative entropy loss $\frac{H(\vec{X}_T) - H(\vec{X}_T \mid X_T + X_S)}{H(\vec{X}_T)}$.}
        \label{subfig:sum_poisson_relative_loss}
    \end{subfigure}
    \caption{Analysis of target's entropy loss using the Poisson distribution with $\pois(\lambda)$, and varying $\lambda$ with $|T| = 1$.}
    \label{fig:poisson_entropy}
\end{figure*}

\begin{figure*}[t]
    \centering
    \begin{subfigure}[t]{0.32\textwidth} \centering
        \includegraphics[width=0.99\textwidth]{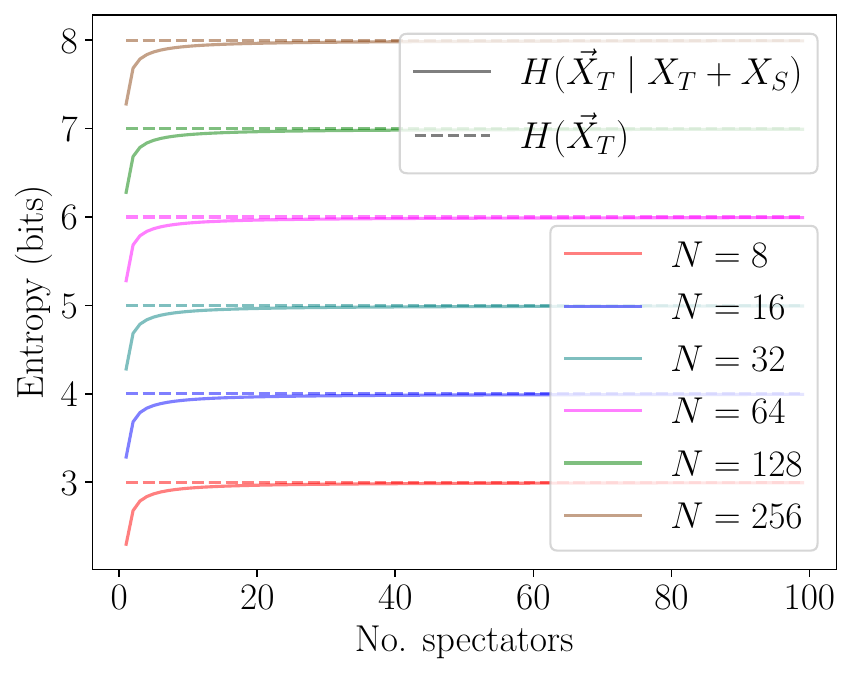}
        \caption{Target's entropy before $H(\vec{X}_T)$ and after $H(\vec{X}_T \mid X_T + X_S)$ the execution.}
        \label{subfig:sum_uniform_multiple_exps}
    \end{subfigure}
    \begin{subfigure}[t]{0.32\textwidth} \centering
        \includegraphics[width=1.02\textwidth]{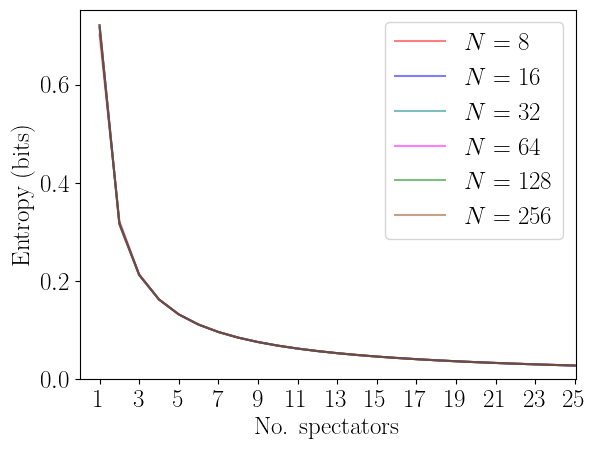}
        \caption{Target's absolute entropy loss $H(\vec{X}_T) - H(\vec{X}_T \mid X_T + X_S)$.}
        \label{subfig:sum_uniform_absolute_loss}
    \end{subfigure}
    \begin{subfigure}[t]{0.32\textwidth} \centering
        \includegraphics[width=0.99\textwidth]{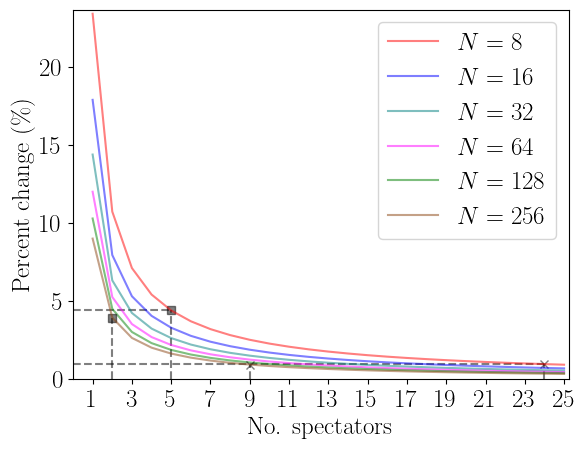}
        \caption{Target's relative entropy loss $\frac{H(\vec{X}_T) - H(\vec{X}_T \mid X_T + X_S)}{H(\vec{X}_T)}$.}
        \label{subfig:sum_uniform_relative_loss}
    \end{subfigure}
    \caption{Analysis of target's entropy loss using the uniform distribution with $\unif\lp{0,N-1}$, and varying $N$ with $|T| = 1$.}
    \label{fig:uniform_entropy}
\end{figure*}
}
\ifCONF 
\discfig 
\fi

We start with discrete input modeled using the uniform and Poisson distributions. The sum of $n$ identical independent Poisson random variables $X_i \sim \text{Pois}(\lambda)$ is equivalent to a single Poisson random variables $X  = \sum_i X_i \sim  \text{Pois}(n\lambda)$ with the mass function
\ifCONF
$ \Pr(X = x) = {\lp{n \lambda}^{k} e^{-n\lambda}}/\lp{k!}. $
\else
\begin{align*}
    \Pr(X = x) = \frac{\lp{n \lambda}^{k} e^{-n\lambda}}{k!}.
\end{align*}
\fi
Note that the Poisson distribution is defined over all non-negative integers, hence the distribution has infinite support. We choose to halt the calculation of $H(X)$ when $\Pr(X = x) < 10^{-7}$ as the contribution of events beyond this point to the entropy is infinitesimal.

Conversely, the sum of $n$ identical independent uniform random variables $X_i \sim \mathcal{U} \lp{0,N-1}$ is not immediately obvious. Caiado~and Rathie~\cite{caiado2007polynomial} derived several equivalent expressions for the mass function of the sum of $n$ uniform random variables, one of which we use in our analysis and is defined as:
\begin{align*}
    \Pr(X = x) = \frac{n}{N^{n}} \lp{ \sum\nolimits_{p = 0}^{ \floor{x / N}} \frac{\Gamma \lp{n + x + p N }\lp{-1}^{p} }{ \Gamma (p + 1 ) \Gamma (n - p + 1) \Gamma(x - pN + 1)}  },
\end{align*}
where $\Gamma(n) = (n-1)!$ is the Gamma function. The domain of $X$ is $\lbr{0,\dots, n(N-1)}$.

Our analysis of awae for these two distribution is given in Figures \ref{fig:poisson_entropy} and~\ref{fig:uniform_entropy}, respectively. We compute and display 
\begin{itemize}
\item the original entropy of target's inputs prior to the computation $H(\vec{X}_T)$ (subfigure a)
\item the awae or target's remaining entropy after the computation $H(\vec{X}_T \mid X_T + X_S)$ (subfigure a)
\item their difference of the two that represents the absolute entropy loss $H(\vec{X}_T) - H(\vec{X}_T \mid X_T + X_S)$ (subfigure b) and
\item the entropy loss relative to the original entropy prior to the execution $\lpp{H(\vec{X}_T) -H(\vec{X}_T \mid X_T + X_S)}/{H(\vec{X}_T)}$ (subfigure c)
\end{itemize} 
with a single target ($|T| = 1$), a varying number of spectators, and varying distribution parameters. Relative entropy loss is included to demonstrate to potential input contributors, who are likely non-experts, that information disclosure is small. That is, disclosure of, e.g., 5\% of input's information is easier to explain to non-experts than 0.1 bits of entropy. 
The absolute loss is equivalent to the mutual information between the target input and the output:
\ifCONF
$ I(\vec{X}_T; O) = H(\vec{X}_T) - H(\vec{X}_T \mid X_T + X_S).$
\else
\begin{align*}
    I(\vec{X}_T; O) 
    &= H(\vec{X}_T) - H(\vec{X}_T \mid X_T + X_S).
\end{align*}
\fi 
\ifCONF 
\else
\discfig 
\fi

Figure~\ref{fig:poisson_entropy} presents this information for the Poisson distribution with $\lambda \in \lbr{4,8,\dots, 128}$. In Figure~\ref{subfig:sum_poisson_multiple_exps}, entropy after the execution converges toward the corresponding entropy prior to the execution for all values of $\lambda$ as the number of spectators increases. Increasing $\lambda$ by a factor of two repeatedly yields an upward shift of these two curves by a constant amount while preserving their respective shapes. The increase is expected as a result of the inputs having more entropy as $\lambda$ increases, but the shape of the remaining entropy is notable, as $\lambda$ does not appear to impact the entropy loss.
This is further confirmed when displaying the absolute entropy loss 
in Figure~\ref{subfig:sum_poisson_absolute_loss}: The resultant curves overlap each other, regardless of $\lambda$.

The relative entropy loss in Figure~\ref{subfig:sum_poisson_relative_loss}, calculated as a percentage of the target's initial entropy, demonstrates how many spectators the computation needs to include to lower the entropy loss to the desired level. The larger the original entropy is (larger $\lambda$), the fewer spectators will be needed to stay within the desired percentage.
For example, 5 spectators are needed with $\lambda = 4$ to limit relative loss to 5\% (marked by $\blacksquare$) and 24 spectators are needed to cap the loss at 1\% (marked by $\times$). When $\lambda=128$, the number of spectators reduces to 3 and 13 to maintain loss tolerances of 5\% and 1\%, respectively.

The same trends hold for the uniform distribution in Figure~\ref{fig:uniform_entropy}, where we use $N \in \lbr{8, 16, \ldots, 256}$, but the values themselves slightly differ. For example, the absolute entropy loss in Figure~\ref{subfig:sum_uniform_absolute_loss} is slightly larger than the loss in Figure~\ref{subfig:sum_poisson_absolute_loss} when the number of spectators is small. When $N=8$ with 3 bits of original entropy, 5 and 24 spectators are needed to achieve at most 5\% and 1\% relative loss, respectively. This is the same as what was observed for Poisson distribution with 3-bit inputs ($\lambda=4$). 

\begin{figure*}[ht]
    \centering
    \begin{subfigure}[t]{0.32\textwidth} \centering
        \includegraphics[width=0.99\textwidth]{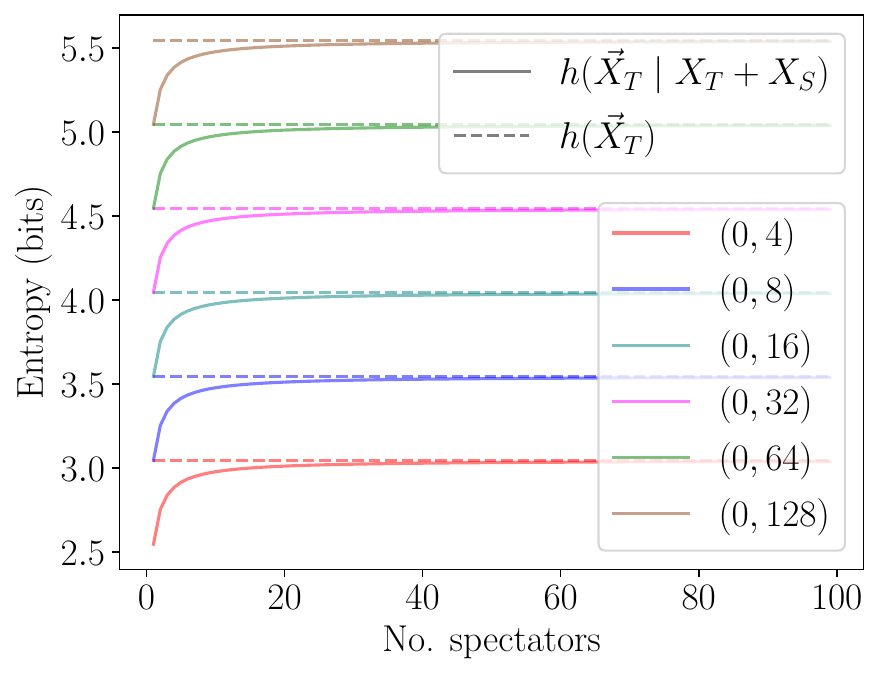}
        \caption{Target's entropy before $h(\vec{X}_T)$ and after $h(\vec{X}_T \mid X_T + X_S)$ the execution.} 
        \label{subfig:normal_v2_multiple_exps}
    \end{subfigure}
    \begin{subfigure}[t]{0.32\textwidth} \centering
        \includegraphics[width=0.99\textwidth]{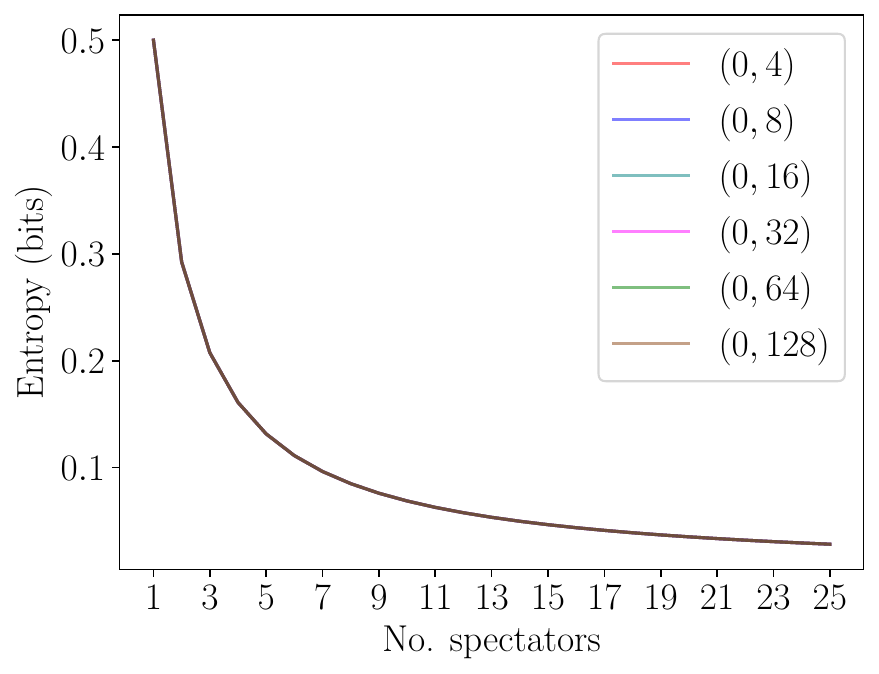}
        \caption{Target's absolute entropy loss $h(\vec{X}_T) - h(\vec{X}_T \mid X_T + X_S)$.} 
        \label{subfig:normal_v2_absolute_loss}
    \end{subfigure}
    \begin{subfigure}[t]{0.32\textwidth} \centering
        \includegraphics[width=0.99\textwidth]{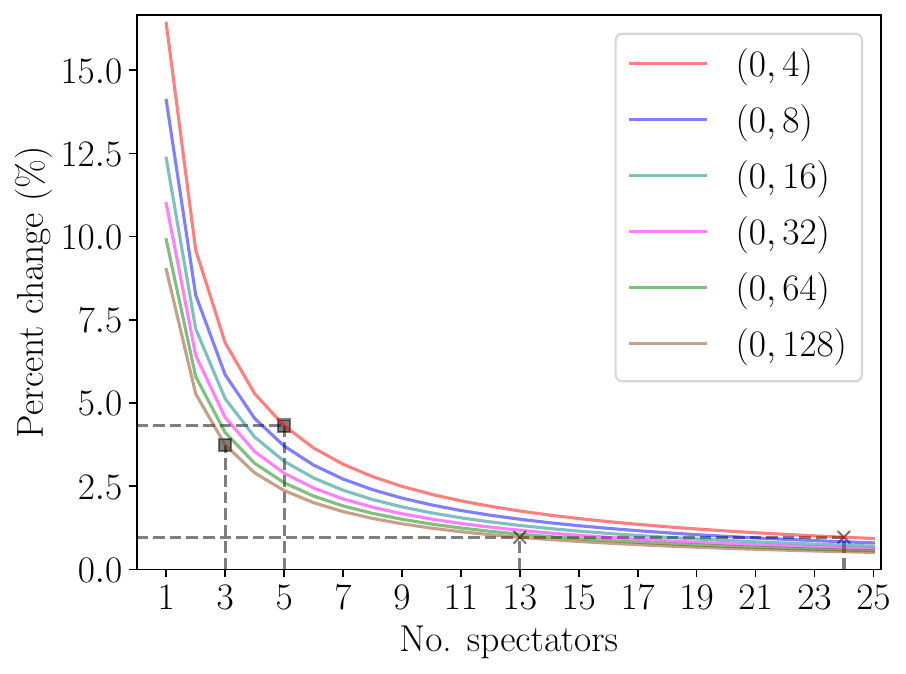}
        \caption{Target's relative entropy loss $\frac{h(\vec{X}_T) - h(\vec{X}_T \mid X_T + X_S)}{h(\vec{X}_T)}$.} 
        \label{subfig:normal_v2_relative_loss}
    \end{subfigure}
    \caption{Analysis of target's entropy loss using the normal distribution with  $\N(0,\sigma^2)$, and varying $\sigma^2$ with $|T| = 1$.}
        \label{fig:normal_v2}
    \end{figure*}

    \begin{figure*}[ht]
        \centering
        \begin{subfigure}[t]{0.32\textwidth} \centering
            \includegraphics[width=0.99\textwidth]{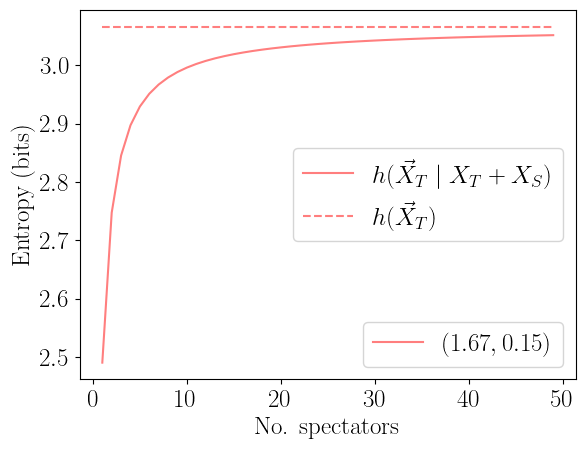}
            \caption{Target's entropy before $h(\vec{X}_T)$ and after $h(\vec{X}_T \mid X_T + X_S)$ the execution.}  
            \label{fig:lognormal_v2_multiple_exps}
        \end{subfigure}
        \begin{subfigure}[t]{0.32\textwidth} \centering
            \includegraphics[width=0.99\textwidth]{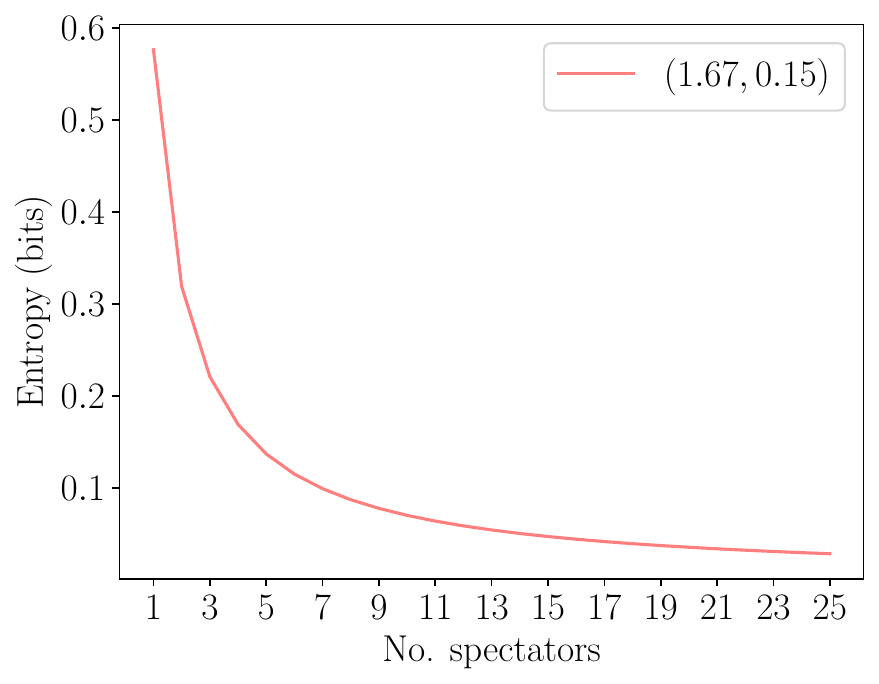}
            \caption{Target's absolute entropy loss $h(\vec{X}_T) - h(\vec{X}_T \mid X_T + X_S)$.} 
            \label{fig:lognormal_v2_absolute_loss}
        \end{subfigure}
        \begin{subfigure}[t]{0.32\textwidth} \centering
            \includegraphics[width=0.99\textwidth]{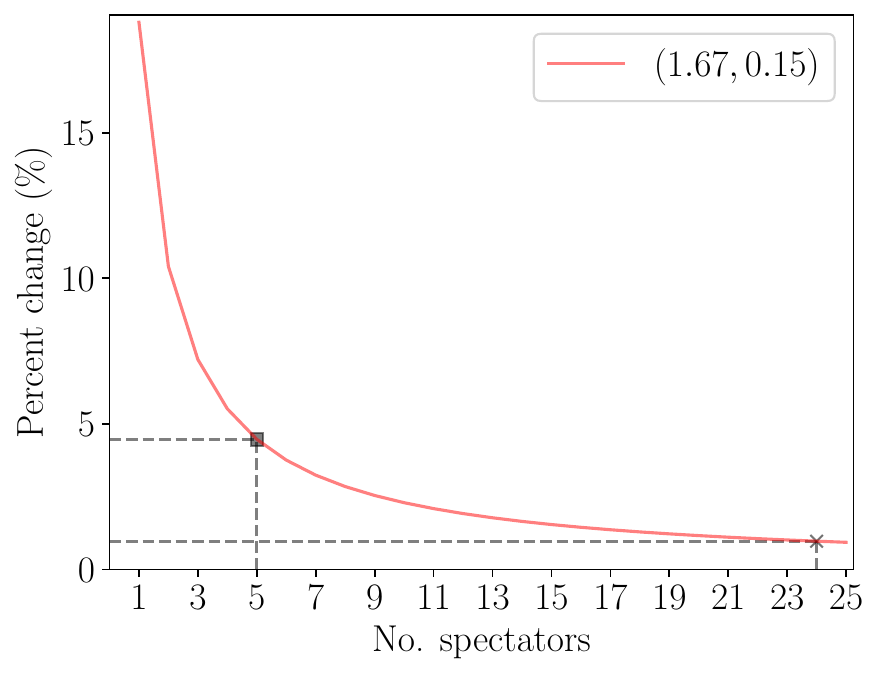}
            \caption{Target's relative entropy loss $\frac{h(\vec{X}_T) - h(\vec{X}_T \mid X_T + X_S)}{h(\vec{X}_T)}$.} 
            \label{fig:lognormal_v2_relative_loss}
        \end{subfigure}
    \caption{Analysis of target's entropy loss using the log-normal distribution with  $\log\N(1.6702,0.145542)$ and $|T| = 1$.}
        \label{fig:lognormal_v2}
    \end{figure*}

\subsection{Continuous Distributions} 
\label{sub:continuous_dist}

For continuous distributions, we shift to differential entropy and analyze normal and log-normal distributions, the latter of which is typically used to model salaries. 
While there is no direct relationship between differential and Shannon entropy (see~\cite[Chapter~8.3]{thomas2006elements}), we demonstrate that they exhibit very similar behavior for the average computation.

The differential entropy of a normal random variable $X_i\sim \N(\mu, \sigma^2)$ is $h(X_i) = \frac{1}{2}\log \lp{2 \pi e \sigma^{2}}$~\cite[Chapter~8.1]{thomas2006elements}. The sum of $n$ identical normal random variables is also normal, namely $X \sim \N(n\mu, n\sigma^2)$. This enables us to directly apply the differential entropy definition to the sum.

The log-normal distribution is a well-established means of modeling salary data for 99\% of the population~\cite{clementi2005pareto}, with the top 1\% modeled by the Pareto distribution \cite{souma2002physics}.  The differential entropy of a log-normal random variable $X_i\sim \log\N(\mu, \sigma^2)$ is $h(X_i) = \log \lp{ e^{\mu + \frac{1}{2}}\sqrt{2 \pi  \sigma^{2}}}$. However, the sum of $n$ log-normal random variables has no closed form and is an active area of research \cite{barakat1976sums,cobb2012approximating,fenton1960sum,beaulieu1995estimating,beaulieu2004optimal,schwartz1982distribution,senaratne2009numerical,wu2005flexible}. We adopt the Fenton-Wilkinson (FW) approximation 
\footnote{
    Other approximations for the sum of log-normal random variables are difficult to translate into an expression for the differential entropy and hence we choose the FW approximation. Its disadvantage is that that the FW approximation deteriorates for  $\sigma^2 > 4$ and small values of $x$ in the density function~\cite{beaulieu1995estimating,wu2005flexible}. Fortunately, our $\sigma^2$ is sufficiently small, allowing us to use the FW approximation free of consequence.
} 
\cite{fenton1960sum,cobb2012approximating} that specifies a sum of $n$ identical independent log-normal random variables $X_i \sim \log\N(\mu, \sigma^2)$ as another log-normal random variable $X \sim \log\N(\hat{\mu}, \hat{\sigma}^2)$ with parameters
\begin{align*}
& \hat{\sigma}^2 =  \ln \lp{\frac{\exp(\sigma^2) - 1}{n} + 1 },\ \hat{\mu} = \ln ( n \cdot \exp(\mu)) + \frac{1}{2} \lp{\sigma^2 - \hat{\sigma}^2}.
\end{align*}
This enables us to compute differential entropy using a closed-form expression. Unlike prior distributions, we use a single set of $\mu$ and $\sigma^2$ parameters calculated from real salary data in~\cite{cao2022priori}; namely, $\mu = 1.6702$ and $\sigma^2 = 0.145542$.

Figures \ref{fig:normal_v2} and~\ref{fig:lognormal_v2} present experimental evaluation of entropy loss with a single target and a varying number of spectators for normal and log-normal distributions, respectively. As before, we report target's entropy before and after the execution, the difference of the two as the absolute entropy loss, and the entropy loss relative to the entropy before the execution.

In Figure~\ref{fig:normal_v2} (normal),
we set the mean $\mu = 0$ for all experiments (since differential entropy does not depend on $\mu$) and vary $\sigma^2$ from 4 to 128. The results are consistent with the discrete counterparts in terms of the trends, curve shapes, and specific values.
The absolute loss in Figure~\ref{subfig:normal_v2_absolute_loss} is once again constant for any $\sigma^2$ and the relative loss is dictated by the amount of input's entropy in Figure~\ref{subfig:normal_v2_relative_loss}. When  $\sigma^2 = 4$ and inputs have 3 bits of entropy, the number of spectators required  to maintain at most 5\% and 1\% entropy loss (5 and 24 spectators, respectively) is the same as for Poisson and uniform distributions with 3-bit inputs ($\lambda = 4$ and $N=8$, respectively). With 5.5-bit inputs ($\sigma^2 = 128$), 3 and 13 spectators are needed to achieve at most 5\% and 1\% loss, respectively, which the same as for Poisson distribution with 5.5-bit inputs ($\lambda = 128$).

The results in Figure~\ref{fig:lognormal_v2} (log-normal with real salary parameters) are consistent with both the discrete and continuous distributions. Surprisingly, we observe the same 5 and 24 spectators achieve at most 5\% and 1\% relative loss, as observed with all other distributions (with input original entropy being slightly over 3 bits).

Before concluding our discussion of continuous distributions, we are able to show one more result. 
We experimentally demonstrated that the amount of absolute entropy loss is parameter-independent for several distributions, but we can formally prove this for normally distributed 
inputs:
\begin{claim} \label{claim2}
    If the inputs are modeled by independent identically distributed normal random variables, the absolute entropy loss $h(\vec{X}_T) - h(\vec{X}_T \mid X_T + X_S) $ depends only on the number of target $\norm{T} = t$ and spectator $\norm{S} = n$ inputs and is $\frac{1}{2} \log\lp{ \frac{t}{n} + 1}$.
\end{claim}
\newcommand\prooftwo{
Let $\norm{T} = t$ and $\norm{S} = n$, such that $X_T \sim \N (0, t\sigma^2)$ and $X_S \sim \N (0, n\sigma^2)$. The absolute entropy loss is therefore 
\begin{align*}
h(\vec{X}_T) \ifCONF\else\fi- h(\vec{X}_T \mid X_T + X_S)  \ifCONF \eqanchor \else \eqanchor \fi
= h(\vec{X}_T) - \lp{h(\vec{X}_T) +h\lp{X_S}   - h\lp{X_T +X_S}}
\ifCONF \breakcmd \else \fi
=  h\lp{X_T +X_S} - h\lp{X_S}  \\
&= \frac{1}{2} \log 2 \pi e (t + n)\sigma^2  - \frac{1}{2} \log 2 \pi e n\sigma^2 \ifCONF \breakcmd  \else \fi=
  \frac12 \log\lp{ \frac{t}{n} + 1} = \Theta\lp{\log\lp{ \frac{t}{n} + 1}} ,
\end{align*}
which depends only on $n$ and $t$.}
\ifCONF 
\begin{proof}
    \prooftwo
    \end{proof}
\else
\begin{proof}
\prooftwo
\end{proof}
\fi

\subsection{Discrete vs. Continuous Distributions} 
\label{ssub:Discrete_v_Continuous}

We next compare the information loss across all four (discrete and continuous) distributions. We choose parameters to maintain the initial entropy of an input, $H({X}_i)$ or $h({X}_i)$, to be approximately 3 bits, as to reasonably correspond to the log-normal distribution. This leads to $\text{Pois}(4)$, $\unif \lp{0, 7}$, and $\N(0, 4)$. We plot this information for a single target and a varying number of spectators in Figure~\ref{fig:disc_cont_abs_rel}. 

In the figure, all distributions converge with $\ge 4$  spectators and are very close even with 3 spectators. This convergence on large values is expected as a consequence of the central limit theorem. From the four distributions, the closest are the Poisson results with $\lambda=4$ (discrete) and the normal distribution $\N(0, 4)$ (continuous). 
Unlike normal, log-normal, and the single-variate uniform distributions, an exact expression of the entropy of a Poisson distribution has not been derived. Instead, when computing the necessary values in Section~\ref{sub:Discrete_Distributions}, we directly applied the definition of Shannon entropy. To draw a parallel between discrete and continuous distributions, and specifically show a similarity between Poisson and normal distributions, we turn to an approximation of Poisson distribution's entropy computation.

It was conjectured that for sufficiently large $\lambda$ (e.g., $\lambda > 10$), the Poisson distribution's Shannon entropy can be approximated by $H(X_i) = \frac{1}{2} \log (2 \pi e \lambda)$, which resembles $h(X_i) = \frac{1}{2} \log (2 \pi e \sigma^2)$ used for normal distributions. Evans~and~Boersma~\cite{evans1988entropy} proposed a tighter bound (further formalized by Cheraghchi~in~\cite{cheraghchi2019expressions}), to be 
\begin{align*}
    H(X_i) = \frac{1}{2} \log (2 \pi e \lambda) - \frac{1}{12\lambda} - \frac{1}{24 \lambda^{2}} - \frac{19}{360 \lambda^3} + O(\lambda^4)
\end{align*}
and remains close to that of normal distribution with $\sigma^2 = \lambda$.

 One implication of this result for us is that Claim~\ref{claim2}, which we demonstrated for normal distributions, would apply to the approximation of Poisson distributions as well. As a result, we obtain independence of the (absolute) entropy loss of distribution parameters for both discrete and continuous distributions and almost identical behavior across the distributions as a function of the number of spectators.

\begin{figure}[t]
    \centering
        \ifCONF
        \includegraphics[width=0.33\textwidth]{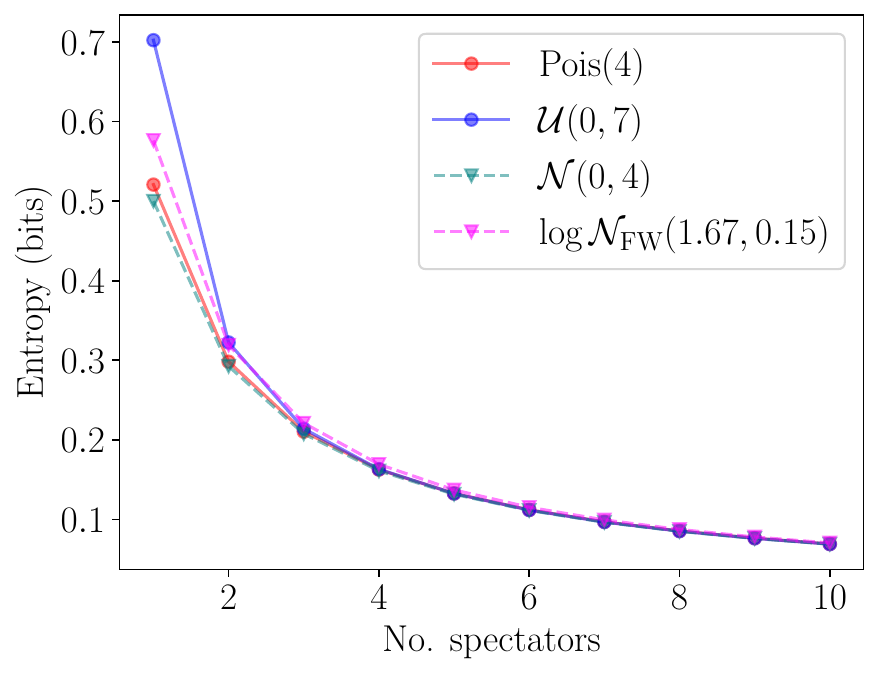}
       \else 
        \includegraphics[width=0.4\textwidth]{dscrt_cnt_abs_ln.pdf}
    \fi
    
        \label{fig:discrete_continuous_absolute_loss}
    \ignore{    
        \begin{subfigure}[t]{0.49\textwidth} \centering
            \includegraphics[width=0.67\textwidth]{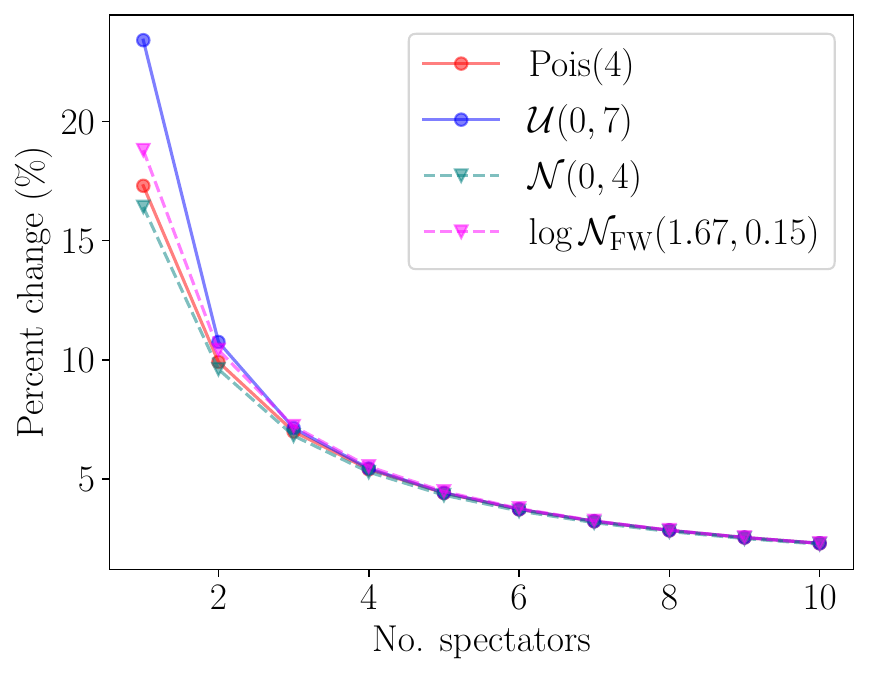}
            \caption{Target's relative entropy loss $\frac{H(\vec{X}_T) - H(\vec{X}_T \mid X_T + X_S)}{H(\vec{X}_T)}$ (discrete) and $\frac{h(\vec{X}_T) - h(\vec{X}_T \mid X_T + X_S)}{h(\vec{X}_T)}$ (continuous).}
    
            \label{fig:discrete_continuous_relative_loss}
        \end{subfigure}
    }
        \caption{Comparing target's absolute entropy loss for discrete $H(\vec{X}_T) - H(\vec{X}_T \mid X_T + X_S)$ and continuous $h(\vec{X}_T) - h(\vec{X}_T \mid X_T + X_S)$ distributions.}
        \label{fig:disc_cont_abs_rel}
    \end{figure}

\ifCONF
Furthermore, our analysis using Shannon/differential entropy is partially echoed for the min-entropy (demonstrated in the full version of the text~\cite{baccarini2022understanding}).
 In particular, we conjecture independence of the attacker's input in the min-entropy-based awae and show that plots for the absolute min-entropy loss closely resemble those for Shannon entropy for the Poisson distribution. 
\else

\subsection{Min-Entropy Analysis}
\label{sub:min_entropy_analysis}

We treat min-entropy as an alternative to Shannon entropy, which was studied in the context of information flow by Smith~\cite{smith2009foundations}. While we are unable to go as far in our analysis as in the case of (Shannon) entropy, we certainly observe similar trends. We begin by defining the concept of \emph{vulnerability}:
\begin{definition}[Vulnerability, \cite{smith2009foundations}]
Given a discrete random variable $X$ with support $\X$, the \emph{vulnerability} of $X$, denoted by $V_{\infty}(X)$ over the unit interval $\lb{0,1}$ is given by
\begin{align*}
V_{\infty}(X) = \max_{x \in \X} \Pr(X = x).
\end{align*} 
\end{definition}
The vulnerability $V_{\infty}(X)$ is interpreted as the worst-case probability that an adversary could guess the value of $X$ in one attempt. If $m$ guesses are allowed, the adversary's success probability is at most $m V_{\infty}(X)$. The implication is that if the vulnerability with a practical number of $m$ guesses is significant, then $V_{\infty}(X)$ must also be significant.  Since the vulnerability is a probability, we can convert it to an entropy measure (in bits) by taking the logarithm of $V_{\infty}(X)$. Conveniently, this is exactly the definition of \emph{min-entropy} $H_{\infty}(X)$:
\begin{align*}
H_{\infty}(X) = \log \frac{1}{V_{\infty}(X)}.
\end{align*}
Smith's~\cite{smith2009foundations} motivation for departing from Shannon entropy stems from its ineffectiveness of properly assessing the threat the output $Y$ has on its input $X$.

Since our analysis studies the relationship between input and output random variables (i.e. $X_T$ and $O$), a necessary extension is the \emph{conditional vulnerability}, which specifies the expected probability of guessing $X$ in one try, given that $Y$ is observed:
\begin{definition}
Given two random variables $X$ and $Y$ with supports $\X$ and $\Y$, resp., the \emph{conditional vulnerability} $V_{\infty}( X \mid Y)$ is
\begin{align*}
V_{\infty}(X\mid Y) = \summ{y \in \Y}{} \Pr(Y = y)\cdot  V_{\infty}(X  \mid Y = y),
\end{align*} 
where 
\begin{align*}
V_{\infty }(X  \mid Y = y) = \max_{x \in \X} \lp{\Pr(X = x \mid Y = y)}.
\end{align*}
\end{definition}

Having established the necessary foundations of min-entropy, we are equipped to extend our single-execution analysis of Section~\ref{sec:single_calculation} from the perspective of min-entropy:
\begin{definition}[] \label{def:awae_min}
    The attacker's weighted average min-entropy ($\textup{awae}_{\infty}$) of a target $\vec{X}_T$ attacked by parties $A$ is defined for all $\vec{x}_A \in D_A$ as  
    \begin{align*}
        \textup{awae}_{\infty} (\vec{x}_A) &=  H_{\infty}(X_T \mid O, \vec{X}_A = \vec{x}_A) =-\log \sum_{o \in D_O} \Pr(O = o \mid \vec{X}_A = \vec{x}_A ) \cdot V_{\infty}( \vec{X}_T \mid \vec{X}_A = \vec{x}_A, O = o), 
      \end{align*}
    where $ V_{\infty}( \vec{X}_T \mid \vec{X}_A = \vec{x}_A, O = o)$ is the conditional vulnerability defined above. 
\end{definition} 
The above definition is a concrete min-entropy specification of Ah-Fat~and~Huth's~\cite{ah2019optimal} generalized awae, which is parameterized by $\alpha$ and a gain function $g$. We can manipulate Definition~\ref{def:awae_min} into terms consistent with Section~\ref{sec:single_calculation} by plugging in the expression for conditional vulnerability:
\begin{align*}
    \textup{awae}_{\infty} (\vec{x}_A) &= -\log \sum_{o \in D_O} \Pr(O = o \mid \vec{X}_A = \vec{x}_A )  \cdot \lp{\max_{\vec{x}_T \in D_T}  \Pr( \vec{X}_T  = \vec{x}_T \mid \vec{X}_A = \vec{x}_A, O = o) } \\ 
    &= -\log \sum_{o \in D_O} \Pr(O = o \mid \vec{X}_A = \vec{x}_A )  \cdot \lp{\max_{\vec{x}_T \in D_T} \frac{\Pr( O = o \mid \vec{X}_T  = \vec{x}_T, \vec{X}_A = \vec{x}_A) \cdot \Pr( \vec{X}_T  = \vec{x}_T) }{\Pr(O = o \mid \vec{X}_A = \vec{x}_A ) }} \\
    &= -\log \sum_{o \in D_O}  \left(\max_{\vec{x}_T \in D_T} \Pr( O = o \mid \vec{X}_T  = \vec{x}_T, \vec{X}_A = \vec{x}_A) \right. \left.\cdot \Pr( \vec{X}_T  = \vec{x}_T) \right).
 \end{align*}
In the second line we invoked Bayes' theorem, and in the third line we observed that the denominator is a constant factor in the max expression and could thus be factored out and subsequently cancelled with the leading $\Pr(O = o \mid \vec{X}_A = \vec{x}_A )$. 

In Claim~\ref{claim1}, we proved  $\textup{awae}(\vec{x_A})$  was independent of the attacker's input $\vec{x}_A$. Conversely, $\textup{awae}_{\infty}(\vec{x_A})$ cannot be simplified further to prove the claim holds. Hence, we conjecture the following:
\begin{conjecture}
  $\textup{awae}_{\infty}(\vec{x_A})$  is independent of attacker's input vector $\vec{x_A}$. 
\end{conjecture}
We can, however, repeat the calculation of Figure~\ref{fig:twae_vs_awae} using min-entropy. In Figure~\ref{fig:min_awae}, we once again observe the same behavior that the adversarial knowledge does not change by varying its inputs into the computation.\footnote{Interestingly,  $\textup{awae}_{\infty}(\vec{x_A})$ is the same for $\norm{S} = 1$ and $\norm{S} = 2$  and the curves overlap on the plot.} This suggests the conjecture holds for the average salary computation. Hence, we assume such in our subsequent analysis.

The next logical step is to examine how the  effect  transitioning from Shannon entropy to min-entropy has on the absolute loss. 
We compute and display both absolute losses in Figure~\ref{fig:min_v_shannon_abs}, where participants' inputs are modeled by the Poisson distribution (as in Section~\ref{sub:Discrete_Distributions}). 
As observed previously in Figure~\ref{subfig:sum_poisson_absolute_loss}, the Shannon absolute loss curves all overlap each other. Interestingly, we observe the min-entropy absolute loss curves converge towards their Shannon counterparts as $\lambda$ grows. This suggests that for a sufficiently large statistical parameter, the choice of metric used to represent information disclosure is less impactful.

\begin{figure*}[t]
    \centering
   \begin{subfigure}[t]{0.45\textwidth} \centering
     \includegraphics[width=0.9\textwidth]{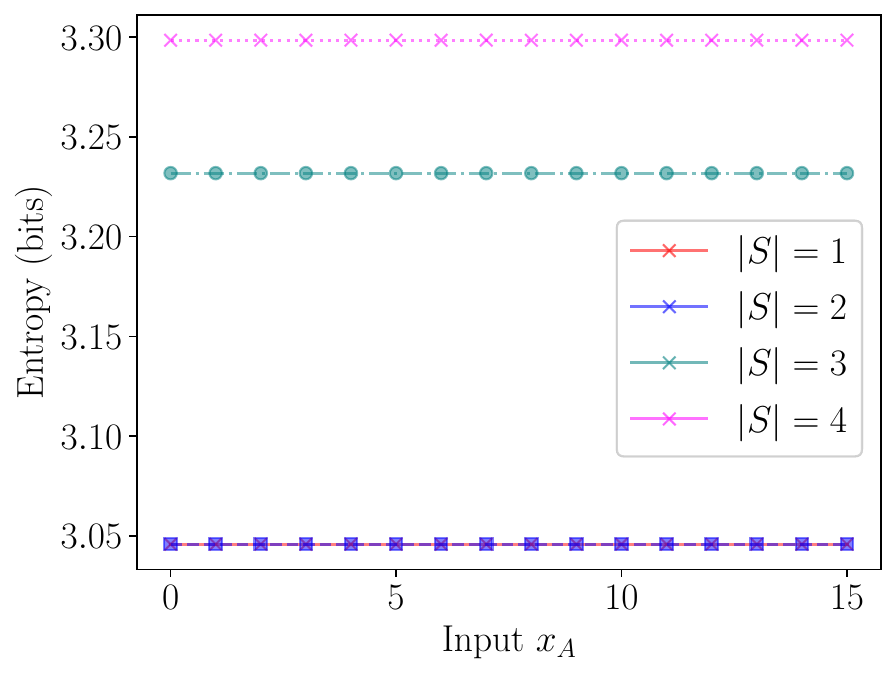}
     \caption{The  $\awae_{\infty}(\vec{x}_A)$ using uniformly distributed inputs over $\unif \lp{0,15}$ with a different number spectators $\norm{S}$.}
     \label{fig:min_awae}
    \end{subfigure}
   \begin{subfigure}[t]{0.45\textwidth} \centering
   \includegraphics[width=0.9\textwidth]{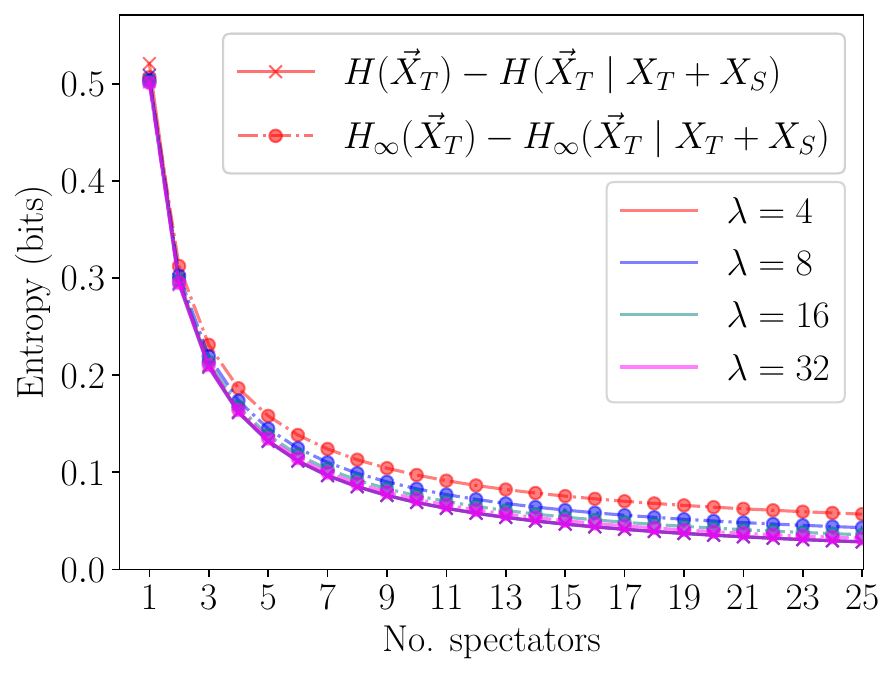}
   \caption{Comparing the Shannon and min-entropy absolute losses using Poisson distribution with $\pois(\lambda)$, varying $\lambda$ with $|T| = 1$.}
   \label{fig:min_v_shannon_abs}
   \end{subfigure}
   \caption{Min-entropy analysis.}
   \label{fig:min_ent_figs}
 \end{figure*}

\fi

\ifCONF \else
\subsection{Mixed Distribution Parameters}
\label{sub:mixed_distribution_parameters}
\fi

Up to this point, we have assumed that all participants' inputs are sampled from identically distributed random variables. However, we can relax this assumption and investigate if/how the information disclosure changes if parties' inputs are non-identically distributed. For example, employee salaries may differ slightly from company to company, while still following the same distribution. We can model this by adjusting the statistical parameters of individual participants.
\ifCONF
This poses an interesting problem where there are multiple groups with different distribution parameters.
As such, we are interested in determining which prior claims are still valid or need to be modified. Claim~\ref{claim1} (attacker input independence) will hold regardless of how participants' inputs are distributed. Conversely, Claim~\ref{claim2} (dependence on the number of targets and participants in the absolute entropy loss) must be reworked since the claim is formulated under the assumption that inputs are identically distributed. We investigate this relaxation, as well as conduct additional experiments, in the full version of the paper~\cite{baccarini2022understanding}.
\else 

We begin by formalizing the notion of participant ``groups''. Define $\grp$ as a finite set of statistical distributions, of which participants' inputs can be sourced from. For example, if we have two groups $\sf B$ and $\sf C$ of normally distributed inputs parameterized by $\N(0, \sigma^2_\textsf{B})$ and $\N(0,\sigma^2_\textsf{C})$ (where $\sigma_\textsf{B}^2 \neq \sigma_\textsf{C}^2$), respectively, then $\grp = \lbr{\textsf{B}, \textsf{C}}$.  This formulation poses two interesting directions for introducing participant group identities, i.e., correspondence of a participant to one of the distribution groups, into our analysis:
\begin{itemize}
    \item \textbf{Group identities of individual participants are known.} The first setting we consider is that the identity, i.e., the group, of each individual participant is known. In practice, this is realized by multiple entities with inputs modelled by different statistical distributions contribute to a computation, where the number of inputs submitted by each is publicly available.

    \item \textbf{Group identities of individual participants are unknown.} Conversely, we have the scenario where we have knowledge of the possible distribution groups participants can belong to with anticipated likelihoods, but the group identity of an individual party is not known. This is objectively more general than the first category, but requires knowledge in the form of the probabilities of an arbitrary participant belonging to each group.
    
\end{itemize}
It is therefore of interest to revisit our prior conclusions under the known \emph{and} unknown group identity generalizations (denoted by Cases 1 and 2, respectively), since both formulations bear operational significance. 

\medskip\noindent
\textbf{Entropy loss as a result of computation participation.}
The first conclusion we revisit is Claim~\ref{claim1}, since it is integral to our analysis as a whole. The claim states that the information disclosure from the average function output is independent of the attackers' inputs. Based on this result, our subsequent analysis enabled us to derive expressions for $\textup{awae}$.

In the current generalized setting, Claim~\ref{claim1} remains true for both Cases 1 and 2, since the derivation in the proof of Claim~\ref{claim1} itself remains unchanged.
However, we must adjust Equation~\ref{eq:claim1_followup} such that participant group membership is captured by our entropy measure.
We recall our definitions of entropy remaining after participation and the absolute entropy loss:
\begin{align}
    H \lp{\vec{X}_T \mid X_T + X_S} &=  H\lp{\vec{X}_T}+H\lp{X_S} -H\lp{X_T+ X_S} \label{eq:disclosure}\\ 
    H(\vec{X}_T) - H(\vec{X}_T \mid X_T + X_S) &=  H\lp{X_T +X_S} - H\lp{X_S} \label{eq:abs_loss}
\end{align}
Accounting for group identities, we introduce the \emph{participant identity random variable} $\ID_{P_i}$ supported by $\grp$. This corresponds to the group identity of participant $P_i$, and we denote $\id_{P_i} \in \grp$ as the value $\ID_{P_i}$ takes. We similarly denote $\vec{\ID}_P = \lp{\ID_{P_1}, \dots, \ID_{P_m}}$ as a multidimensional random variable, with $\vec{\id}_P$ as the vector of individual values of the same size. 

At this point, our analysis splits into two directions based on the knowledge of individual group identities.
\begin{caseof}
    \case{}{ 
If participant group identities are available (i.e., $\vec{\ID}_T = \vec{\id}_T$ and $\vec{\ID}_S = \vec{\id}_S $), Equation~\ref{eq:disclosure} becomes:
\begin{align}
    \begin{split}
    H \lp{\vec{X}_T \mid X_T + X_S, \vec{\ID}_T = \vec{\id}_T,  \vec{\ID}_S = \vec{\id}_S} = H\lp{\vec{X}_T\mid \vec{\ID}_T = \vec{\id}_T }+& H\lp{X_S\mid  \vec{\ID}_S = \vec{\id}_S}  \\
    -&  H\lp{X_T+ X_S \mid \vec{\ID}_T = \vec{\id}_T,  \vec{\ID}_S = \vec{\id}_S }.
    \end{split}
    \label{eq:case_1_entropy}
\end{align}
Since we have exact knowledge of each participant's identity, we can explicitly partition the input random variables accordingly. Therefore, all the above quantities are computable with minimal deviation from our original analysis. For instance, if we recall our earlier example with two participant groups $\textsf{B}$ and $\textsf{C}$, the term $ H\lp{X_S\mid  \vec{\ID}_S = \vec{\id}_S}$ can be computed as:
\begin{align*}
    H\lp{X_S\mid  \vec{\ID}_S = \vec{\id}_S} =  H\lp{\summ{j \in S}{}X_{S_j}\mid  \vec{\ID}_S = \vec{\id}_S} = H\lp{ \lp{\sum_{\substack{j \in S \colon \id_{S_j} = \textsf{B} }}    X_{S_j} + \sum_{\substack{j \in S \colon \id_{S_j} = \textsf{C} }}  X_{S_j}}\mid  \vec{\ID}_S = \vec{\id}_S}.
\end{align*}
The absolute entropy loss directly follows from Equation~\ref{eq:case_1_entropy}, and is computed as: 
\begin{align}
        \begin{split}
    H \lp{\vec{X}_T \mid \vec{\ID}_T = \vec{\id}_T} &-  H\lp{\vec{X}_T \mid X_T + X_S, \vec{\ID}_T = \vec{\id}_T,  \vec{\ID}_S = \vec{\id}_S} \\ 
    &\myquad[10]= 
    H(X_T + X_S \mid\vec{\ID}_T = \vec{\id}_T,  \vec{\ID}_S = \vec{\id}_S ) - H(X_S \mid  \vec{\ID}_S = \vec{\id}_S ) .
    \end{split}
    \label{eq:abs_loss_assump_1}
\end{align} 
Under this generalization, the group to which a target belongs impacts how much information is disclosed from the computation, i.e., the disclosure can fall within a range  based on the values $\vec{\id}_T$ can take. We determine the worst-case information disclosure by iterating over all possible target identities and taking the maximum:
\begin{align*}
\max_{\vec{\id}_{T}} \lp{H(X_T + X_S \mid\vec{\ID}_T = \vec{\id}_T,  \vec{\ID}_S = \vec{\id}_S ) - H(X_S \mid  \vec{\ID}_S = \vec{\id}_S ) }.
\end{align*}
We can further refine our earlier notation established in Section~\ref{sec:definitions} to encompass participant group identities (applicable to both targets and spectators). Let $P_{\GG} \subset P$ be the set of participants belonging to group $\GG \in \grp$. The sum of random variables modelling participant inputs is given as 
\begin{align*}
X_S = \sum_{\GG \in \grp} \sum_{i \in P_{\GG}} X_{P_{i}} =  \sum_{\GG \in \grp} X_{P_{\GG}},
\end{align*}
where  $ X_{P_{\GG}} =   \sum_{i \in P_{\GG}} X_{P_{i}}$. 

    }

\case{}{
When the group identities of individual inputs are not known and only the probability of belonging to a given group is given, the procedure for evaluating the information disclosure changes. The probability mass and density functions, respectively, for the  participant inputs random variables are now:
\begin{align*}
    \Pr \lp{\vec{X}_P=\vec{x}_P} 
    &= \sum_{\vec{\id}_{P}} \Pr \lp{\vec{\ID}_P = \vec{\id}_P } \Pr \lp{\vec{X}_P=\vec{x}_P \mid\vec{\ID}_P = \vec{\id}_P }\\
       f\lp{\vec{x}_P} &= \sum_{\vec{\id}_{P}} \Pr \lp{\vec{\ID}_P = \vec{\id}_P } f \lp{\vec{x}_P \mid\vec{\ID}_P = \vec{\id}_P }.
\end{align*}
For a participant set $P$, there are $\norm{\grp}^{\norm{P}}$ possible identity configurations, such that the number of terms in the summation is exponential in the number of participants  and/or size of the group identity set.
The Shannon and differential entropies are now computed as: 
\begin{align}
    H(\vec{X}_P) 
    &= {-} \sum_{\vec{x}_P\in D_{X_P}} \lp{ \sum_{\vec{\id}_{P}} \Pr \lp{\vec{\ID}_P = \vec{\id}_P }\Pr \lp{\vec{X}_P=\vec{x}_P {\mid}\vec{\ID}_P = \vec{\id}_P }}  \log \lp{ \sum_{\vec{\id}_{P}}\Pr \lp{\vec{\ID}_P = \vec{\id}_P } \Pr \lp{\vec{X}_P=\vec{x}_P {\mid}\vec{\ID}_P = \vec{\id}_P }}, \nonumber\\
    h(\vec{X}_P) 
    &= {-} \int_{\X_P} \lp{ \sum_{\vec{\id}_{P}}\Pr \lp{\vec{\ID}_P = \vec{\id}_P }f \lp{\vec{x}_P \mid\vec{\ID}_P = \vec{\id}_P }} \log \lp{ \sum_{\vec{\id}_{P}}\Pr \lp{\vec{\ID}_P = \vec{\id}_P } f \lp{\vec{x}_P \mid\vec{\ID}_P = \vec{\id}_P }} d\vec{x}_P .
    \label{eq:case2_entropy}
  \end{align}
The fundamental difference between the entropy calculation under this generalization and the  analysis conducted in Sections~\ref{sub:Discrete_Distributions}~and~\ref{sub:continuous_dist} is that \emph{the entropy of these random variables is no longer exactly modeled by the input distribution itself} (e.g., Poisson, uniform, Gaussian, log-normal). Furthermore, for continuous input distributions,  our previous approach of leveraging closed-form expressions is no longer applicable when group identities are unknown  -- the information disclosure must be computed numerically, rather than exactly. 
    }

\end{caseof}

\medskip\noindent 
\textbf{Parameter independence of the absolute loss for normally distributed inputs.}
The next conclusion we revisit is Claim~\ref{claim2}, which previously stated that for normally distributed inputs, the absolute entropy loss depends only on the number of targets and spectators present in the computation. This conclusion changes in the generalized setting when the participants' inputs are no longer identically distributed, as we demonstrate below.

\begin{caseof}
\case{}{
   Let $\sigma^2_{\GG}$ be the standard deviation of the participants inputs that belong to group $\GG$.
Using the definitions from Section~\ref{sub:continuous_dist} for the entropy sums of identically distributed normal random variables, the entropy of $ X_{P_{\GG}}$ is
   $h( X_{P_{\GG}}) = \frac{1}{2} \log \lp{2 \pi e \sigma^2_{\GG} \norm{P_{\GG}}}$. 
   We similarly derive the following expressions needed to compute the absolute entropy loss (given in Equation~\ref{eq:abs_loss_assump_1}):
\begin{align*}
    h(X_T + X_S \mid\vec{\ID}_T = \vec{\id}_T,  \vec{\ID}_S = \vec{\id}_S ) &= \frac{1}{2} \log 2 \pi e \lp{\sum_{\GG \in \grp} \lp{\sigma^2_{\GG}\mdot \norm{T_{\GG}} +\sigma^2_{\GG}\mdot \norm{S_{\GG}}}} \\ 
    h(X_S  \mid \vec{\ID}_S = \vec{\id}_S ) &= \frac{1}{2} \log 2 \pi e \lp{\sum_{\GG \in \grp} \lp{\sigma^2_{\GG}\mdot \norm{S_{\GG}}} }.
\end{align*}
Plugging in these equations to our expression for the absolute entropy loss and simplifying yields:
\begin{align*}
    h(X_T + X_S \mid\vec{\ID}_T = \vec{\id}_T,  \vec{\ID}_S = \vec{\id}_S ) - h(X_S \mid  \vec{\ID}_S = \vec{\id}_S ) = \frac{1}{2} \log \lp{\frac{\sum_{\GG \in \grp} \lp{\sigma^2_{\GG}\mdot \norm{T_{\GG}}}}{\sum_{\GG \in \grp} \lp{\sigma^2_{\GG}\mdot \norm{S_{\GG}}}} + 1}.
\end{align*}
Unlike the analysis in the proof of Claim~\ref{claim2}, the standard deviations do not cancel. However, we can reformulate our interpretation of the sums of standard deviations when accounting for group identities. Let us define $\sigma^2_{\text{B}} \in \R_{>0}$  as the ``base standard deviation'' for all input random variables. Then, for all $\GG \in \grp$, there exists some $\delta_{\GG} > 0$ such that $\sigma^2_{\GG} = \delta_{\GG}\mdot \sigma^2_{\text{B}}$. Substituting into the above expression yields: 
\begin{align*}
    h(X_T + X_S \mid\vec{\ID}_T = \vec{\id}_T,  \vec{\ID}_S = \vec{\id}_S ) - h(X_S \mid  \vec{\ID}_S = \vec{\id}_S )  &= \frac{1}{2} \log \lp{\frac{\sum_{\GG \in \grp} \lp{\delta_{\GG}\mdot\sigma^2_{\text{B}}\mdot \norm{T_{\GG}}}}{\sum_{\GG \in \grp} \lp{\delta_{\GG}\mdot\sigma^2_{\text{B}}\mdot \norm{S_{\GG}}}} + 1} \\
    &=  \frac{1}{2} \log \lp{\frac{\sum_{\GG \in \grp} \lp{\delta_{\GG}\mdot \norm{T_{\GG}}}}{\sum_{\GG \in \grp} \lp{\delta_{\GG}\mdot \norm{S_{\GG}}}} + 1}.
\end{align*}
The key conclusion from the above equation is that the absolute entropy loss is not directly affected by the statistical parameter $\sigma_{\GG}^2$, but rather the \emph{relationship} each $\sigma_{\GG}^2$ has (via the scaling factor $\delta_{\ID}$) to the base standard deviation  $\sigma_{\text{B}}^2$. 

To demonstrate this phenomenon, we compute the absolute entropy loss when the target belongs to one of three possible groups differing by $\pm 10\%$ in their average salary. Concretely, we have $\textsf{B}$, $\textsf{C}$, or $\textsf{D}$ with deviations $\sigma^2_{B}$, $1.1^2\sigma^2_{B}$, and $0.9^2\sigma^2_{B}$, respectively. The target is interpreted to ``move'' from group to group as to maintain a consistent group size, e.g., $\lvert S_\textsf{B} \cup T\rvert = \lvert S_\textsf{C}\rvert = \lvert S_\textsf{D}\rvert$ when the targe is in group $\sf B$. This constitutes the curves displayed in Figure~\ref{fig:mixed_figures}. 
\begin{figure*}[t]
    \centering
   \begin{subfigure}[t]{0.45\textwidth} \centering
     \includegraphics[width=0.9\textwidth]{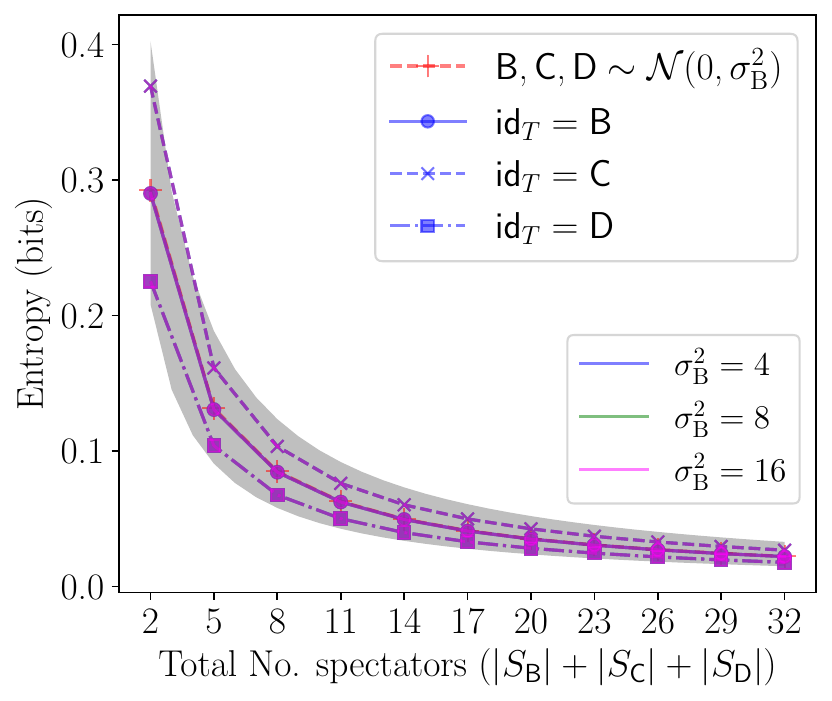}
     \caption{Absolute entropy loss   ${h(\vec{X}_T) - h(\vec{X}_T \mid X_T + X_S)}$}
     \label{fig:abs_mixed}
    \end{subfigure}
   \begin{subfigure}[t]{0.45\textwidth} \centering
   \includegraphics[width=0.94\textwidth]{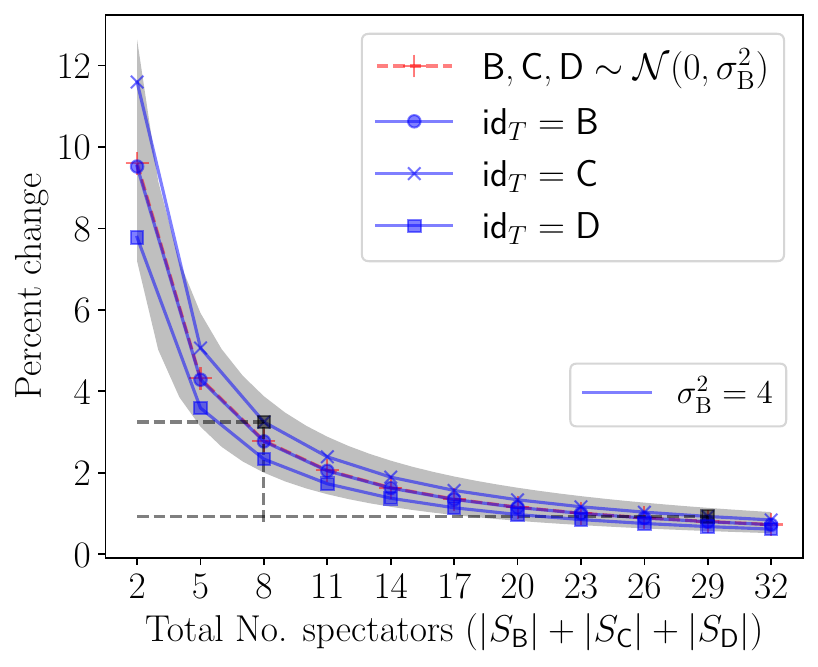}
   \caption{Relative entropy loss  $\frac{h(\vec{X}_T) - h(\vec{X}_T \mid X_T + X_S)}{h(\vec{X}_T) }$}
   \label{fig:rel_mixed}
   \end{subfigure}
   \caption{Mixed distribution analysis under Case 1. The red dashed curves correspond to our baseline where all groups are identically distributed ($\textsf{B},\textsf{C},\textsf{D} \sim  \mathcal{N}(0, \sigma^2_{\text{B}})$), while the remaining curves indicate the target belonging to distinct groups distributed by $\textsf{B} \sim \mathcal{N}(0,\sigma^2_{\text{B}} )$,  $\textsf{C} \sim \mathcal{N}(0, 1.1^2\sigma^2_{\text{B}} )$, and $\textsf{D} \sim \mathcal{N}(0, 0.9^2\sigma^2_{\text{B}} )$. The shaded regions illustrate the full space for the  absolute entropy loss, generated from every possible spectator and group configurations.}
   \label{fig:mixed_figures}
 \end{figure*}
To further illustrate the best- and worst-case information disclosure, we compute the absolute entropy loss for all possible spectator-group configurations and compose the shaded region from the maximums and minimums. Figure~\ref{fig:abs_mixed} reflects our observation that regardless of the base standard deviation $\sigma^2_{B}$ (4, 8, or 16), the curves fall on top of each other. Moreover, the absolute loss when  $\id_T =\mathsf{B}$ is equivalent to our original computation when all groups are identically distributed. We also reproduce our relative loss experiment using the same $\pm 10\%$ salary configuration for  $\sigma^2_{B} = 4$ in Figure~\ref{fig:rel_mixed}. Achieving maximum relative losses of 5\% and 1\% now requires at least 6 and 27 spectators, respectively. 
}

\case{}{When group identities of individual inputs are unknown, we refer to the definition of absolute entropy loss (Equation~\ref{eq:abs_loss}) alongside the expressions for the differential entropy we derived in Equation~\ref{eq:case2_entropy}  for the required quantities and obtain:
\begin{align*}
    h(X_T + X_S) &= 
    - \int_{\X_T \cup \X_S} \lp{ \sum_{\vec{\id}_{T}, \vec{\id}_{S}} 
    \Pr \lp{\vec{\ID}_T = \vec{\id}_T ,\vec{\ID}_S = \vec{\id}_S } 
    f \lp{x_T +x_S \mid\vec{\ID}_T = \vec{\id}_T, \vec{\ID}_S = \vec{\id}_S }} \\ 
    &\myquad[5] \cdot \log \lp{ \sum_{\vec{\id}_{T}, \vec{\id}_{S}} 
    \Pr \lp{\vec{\ID}_T = \vec{\id}_T ,\vec{\ID}_S = \vec{\id}_S } 
    f \lp{x_T+x_S \mid\vec{\ID}_T = \vec{\id}_T, \vec{\ID}_S = \vec{\id}_S }} d(x_T + x_S)
     \\
    h(X_S ) &= 
    - \int_{\X_S} \lp{ \sum_{\vec{\id}_{S}}
    \Pr \lp{\vec{\ID}_S = \vec{\id}_S } 
    f \lp{x_S \mid\vec{\ID}_S = \vec{\id}_S }} \cdot \log \lp{ \sum_{\vec{\id}_{S}}
        \Pr \lp{\vec{\ID}_S = \vec{\id}_S } 
    f \lp{x_S \mid\vec{\ID}_S = \vec{\id}_S }}dx_S
\end{align*}
These values need to be computed numerically since, as previously stated, the random variables that represent the target and spectators' inputs no longer exactly translate  to the input distribution itself.
Utilizing the same group configuration as specified above under Case 1 and assuming the probability for each identity is equally likely (for convenience), we compute the absolute loss in Figure~\ref{fig:mixed_normal_case2} alongside our baseline where every participant belongs to a single group. The most interesting observation is that all the curves overlap each other, a trend originally observed in Sections~\ref{sub:Discrete_Distributions}~and~\ref{sub:continuous_dist}. We note that this is likely a consequence of the experimental configuration itself (groups' salaries differ by $\pm10\%$, identity probabilities are equally likely).

\begin{figure*}[t]
    \centering
     \includegraphics[width=0.4\textwidth]{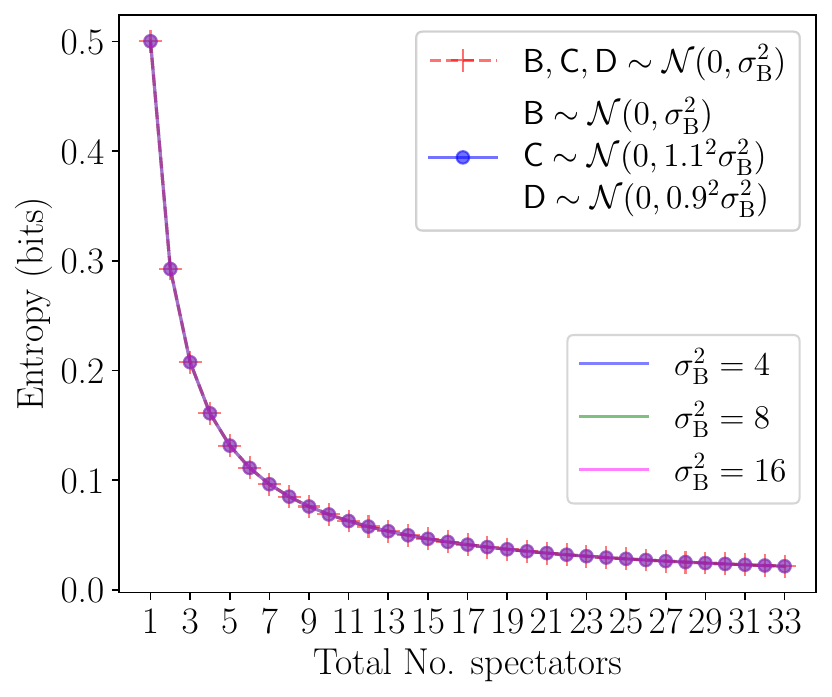}
   \caption{Mixed distribution analysis under Case 2, where the probability of an arbitrary participant belonging to any specific group is equally likely, i.e., $\Pr(\ID_P = \textsf{B})=\Pr(\ID_P = \textsf{C})=\Pr(\ID_P = \textsf{D}) = 1/3$.}
   \label{fig:mixed_normal_case2}
 \end{figure*}
 }
\end{caseof}

\fi

\section{Two Executions} 
\label{sec:jcon_entropy}

A natural generalization of the results of the prior section is to consider executing the average salary computation more than once. For example, after running the Boston gender pay gap study once, the same computation was executed the following year with an extended set of participants. In this case, if the time interval between the executions is small enough such that the inputs do not change between the executions or change minimally, one would expect that repeated participations would lead to additional information disclosure compared to a single execution. Thus, in this section we analyze the case of two executions and demonstrate their impact on the participants. We consider both the cases when a target contributes its input to both executions and when the target participates only in one of the executions and other takes place without the target, but on related inputs. Both cases result in additional information disclosure compared to a single execution, which we quantify in this section.

We partition the set of spectators $S$ into the following subsets: 
\begin{itemize}
\item spectators present only in the first execution $S_1 \subset S$, 
\item spectators present only in the second execution $S_2 \of S \setminus S_1$, 
\item and spectators present in both executions $S_{12} = S \setminus (S_1 \cup S_2)$. 
\end{itemize}
A person participating more than once (target or spectator) enters the same input into both execution.

When the target participates in both executions, we have:
\begin{gather*}
	O_1 =\sum\nolimits_{ i} X_{T_i} + \sum\nolimits_{i \in S_{12}} { X_{i}}+ \sum\nolimits_{i \in S_1} { X_{i}} =
	X_T + X_{S_{12}} + X_{S_1}\\
	O_2 =\sum\nolimits_{ i} X_{T_i} + \sum\nolimits_{i \in S_{12}} { X_{i}}+ \sum\nolimits_{i \in S_2} { X_{i}} = X_T + X_{S_{12}} + X_{S_2}.
\end{gather*}
The random variables $O_1$ and $O_2$ are \emph{not} independent, as they both are comprised of $X_T$ and $X_{S_{12}}$. 
We therefore want to compute the conditional entropy (using differential entropy notation):
\begin{equation} \label{eq:cond-ent}
	h(\vec{X}_T\mid  O_1, O_2) = h(\vec{X}_T, O_1, O_2) - h(O_1, O_2).
\end{equation}

\begin{claim} \label{claim3}
  The above conditional entropy can be expressed as
\begin{align} \label{eq:cond-ent-final}
	h(\vec{X}_T|O_1, O_2) & = h(\vec{X}_T) + h(X_{S_{12}} {+} X_{S_1} ,  X_{S_{12}}{+} X_{S_2} ) - h( O_1, O_2). 
\end{align}  
\end{claim}

	\newcommand\thisproof{
	Simplifying the first term of Equation~\ref{eq:cond-ent} using  the chain rule of entropy $h(X,Y) = h(X\mid Y) + h(Y)$ \cite{thomas2006elements}, we obtain:
	\begin{align*}
		h(\vec{X}_T, O_1, O_2) & = 	h(\vec{X}_T, X_T+ X_{S_{12}}+ X_{S_1}, X_T+ X_{S_{12}}+ X_{S_2})                                               \\
		& = h(\vec{X}_T) + h(X_T + X_{S_{12}}+ X_{S_1} \mid  \vec{X}_T)\ifCONF \breakcmd \else \fi
		+ h(X_T + X_{S_{12}}+ X_{S_2} \mid  X_T + X_{S_{12}}+ X_{S_1} ,\vec{X}_T).     
	\end{align*}
	{Using the fact that all participants' inputs are independent, we have:}
	\begin{align*}
		h(\vec{X}_T, O_1, O_2)  & = h(\vec{X}_T) + h(X_{S_{12}}+ X_{S_1}) 
		\ifCONF \breakcmd\qquad  \else \fi
		+ h(X_T + X_{S_{12}} + X_{S_2} , X_T + X_{S_{12}}+ X_{S_1} \mid \vec{X}_T) \ifCONF \breakcmd \qquad \else \fi
		 - h(X_T + X_{S_{12}} + X_{S_1} \mid \vec{X}_T)\\
						 & = h(\vec{X}_T) + h(X_{S_{12}}+ X_{S_1}) + h(X_{S_{12}} + X_{S_2} ,  X_{S_{12}}+ X_{S_1} ) \ifCONF \breakcmd\qquad \else\fi  - h(X_{S_{12}} + X_{S_1} )\\
						 & = h(\vec{X}_T) + h(X_{S_{12}} + X_{S_1} ,  X_{S_{12}}+ X_{S_2} ).
	\end{align*}
	The second term of Equation~\ref{eq:cond-ent} can be rewritten as:
	\begin{align*}
		h( O_1, O_2) & = h( X_T+ X_{S_{12}}+X_{S_1}, X_T+ X_{S_{12}}+X_{S_2}) \ifCONF \breakcmd \else \fi= h( X_T{+} X_{S_{12}}{+}X_{S_1}) +   h ({X_T {+} X_{S_{12}}{+}X_{S_2}}\mid  {X_T{+}X_{S_{12}}{+} X_{S_1}}),
	\end{align*}
	but cannot be simplified further. Therefore, the final expression of the conditional entropy is 
	\begin{align*} 
		h(\vec{X}_T | O_1, O_2) & = h(\vec{X}_T) + h(X_{S_{12}} + X_{S_1} ,  X_{S_{12}}{+} X_{S_2} ) - h( O_1, O_2). \qedhere
	\end{align*}
	}
	\noindent \ifCONF 
	The derivation can be found in the full version of the text.
	\else
	\begin{proof}
		\thisproof
	\end{proof}
\fi

In the special case when no spectators participate in both executions (i.e., $S_{12} = \emptyset$), the middle term simplifies to 
$h(X_{S_1})  + h( X_{S_2})$.

When the target participates only in one of the experiments, we define executions $O'_1$ and $O'_2$, which are the same as $O_1$ and $O_2$, respectively, except that the target's inputs are not included. For instance, $O'_1 = X_{S_{12}} + X_{S_1}$. The relevant entropies in that case are $h(\vec{X}_T|O'_1, O_2)$ and $h(\vec{X}_T|O_1, O'_2)$.

The above requires us to introduce the definition of joint entropy of correlated random variables. Now, the normal distribution stands out among those considered in Section~\ref{sec:single_calculation} as a suitable candidate for our analysis. The generalized multivariate normal distribution is well-studied and has a closed-form differential entropy, which we discuss next.

\subsection{Bivariate Normal Distributions}
\label{sub:BivariateandTrivariatenormalDistribution}

Evaluating Equation~\ref{eq:cond-ent-final} requires defining the differential entropy of a multivariate normal random variable. We then derive the necessary core parameters for our distributions and use them to compute the conditional entropy.

Let $X_i \sim \N(\mu_i, \sigma_i^2)$ be a single normal random variable as defined in Section~\ref{sec:definitions}. We define $\vec{X} = \lp{X_1, \dots, X_k}^\tran$ to be a general multivariate normal distribution of a $k$-dimensional random vector, with $\vec{X} \sim \N( \bm{\mu}, \bm{\Sigma})$. Here,  {$\bm{\mu} \in \R^k$} is the mean vector specified as
$\bm{\mu} = \text{E}[\vec{X}]  = \lp{\Ex{X_1}, \Ex{X_2} ,\dots, \Ex{X_k}}^{\tran}  
=\lp{\mu_1, \mu_2, \dots, \mu_k}^{\tran},$
and $\bm{\Sigma} \in \R^{k \times k}$ is the ${k \times k}$  covariance matrix with each element defined as 
$\Sigma_{i,j} = \Ex{(X_i - \mu_i)(X_j - \mu_j)} = \Cov{X_i, X_j}$.
The differential entropy of the multivariate normal distribution $\vec{X}$ is given by 
\ifCONF
$h(\vec{X}) = \frac12 \log \lp{\lp{2 \pi e}^{k} \det\bm{\Sigma}},$
\else
\begin{align*}
	h(\vec{X}) & = \frac12 \log \lp{\lp{2 \pi e}^{k} \det\bm{\Sigma}},
\end{align*}
\fi 
\cite[Chapter~8.4]{thomas2006elements} where $ \det\bm{\Sigma}$ is the determinant of the covariance matrix. 
The next step is to characterize our multivariate distributions and determine their covariance matrices.  We also derive their mean vectors which are used for intermediate results.

To compute the second and third terms of Equation~\ref{eq:cond-ent-final}, we formalize the bivariate distributions  $\vec{S} = (X_{S_{12}}+X_{S_1}, X_{S_{12}}+X_{S_2})^\tran$ and  $\vec{O} = \lp{O_1,O_2}^\tran$. 
We denote $\mu_P = \summ{ i}{} \mu_{P_i}$ and  $\sigma_P^2 = \summ{ i}{} \sigma_{P_i}^2$ as the sum of the means and standard deviations, respectively, of all participants within a group $P$. 
Note that the mean is absent from the formula for the differential entropy, and therefore we can safely assume all $\mu_i = 0$. 
Starting with $\vec{O}$, we invoke the linearity of the expectation for the mean vector:
\begin{align*}
	\bm{\mu}_{\vec{O}}
	 &= \begin{pmatrix}
		           \Ex{O_1} \\
		           \Ex{O_2}
	           \end{pmatrix}
	= \begin{pmatrix}
		  \Ex{X_T  {+} X_{S_{12}} {+} X_{S_1} } \\
		  \Ex{X_T  {+} X_{S_{12}} {+} X_{S_2}}
	  \end{pmatrix} 
	  =
	\begin{pmatrix}
		\mu_T {+}\mu_{S_{12}} {+} \mu_{S_1} \\
		\mu_T {+}\mu_{S_{12}} {+} \mu_{S_2}
	\end{pmatrix}
	=\begin{pmatrix}
		 \mu_1 \\
		 \mu_2
	 \end{pmatrix}.
\end{align*}
For the covariance matrix, using the properties  $\Cov{X, X} = \text{Var}\lb{X}$ $ = \sigma_X^2$ and $\Cov{X,Y} = \Cov{Y,X}$ yields
\begin{align*}
	\bm{\Sigma}_{\vec{O}} & =
	\begingroup 
	\setlength\arraycolsep{1pt}
	 \begin{pmatrix}
		                \Cov{O_1, O_1} & \Cov{O_1, O_2} \\
		                \Cov{O_2, O_1} & \Cov{O_2, O_2}
	                \end{pmatrix}
	=
	\begin{pmatrix} 
		\Var{O_1} & \Cov{O_1, O_2} \\
		\Cov{O_1, O_2} & \Var{O_2}
	\end{pmatrix} 
	\endgroup
	\\
	& 
	\begingroup 
	\setlength\arraycolsep{1pt}
				=
	\begin{pmatrix}
		\sigma_T^2 {+} \sigma_{S_{12}}^2 {+} \sigma_{S_1}^2 & \Cov{O_1, O_2}                           \\
		\Cov{O_1, O_2}                           & \sigma_T^2 {+} \sigma_{S_{12}}^2 {+} \sigma_{S_2}^2
	\end{pmatrix} =
	\begin{pmatrix}
		\renewcommand{\arraystretch}{0.6}
		\sigma_{1}^2   & {\text{Cov}[{O_1, O_2}] }\\
		\Cov{O_1, O_2} & \sigma_{2}^2
	\end{pmatrix}.
	\endgroup
\end{align*}
The expression for $\Cov{O_1, O_2}$ can be stated as follows:
\begin{claim} \label{claim4}
	\normalfont $\Cov{O_1, O_2} = \sigma_T^2 + \sigma_{S_{12}}^2$ if ${S_{12}}$ is non-empty,
	and $\Cov{O_1, O_2} $ $= \sigma_T^2$ otherwise.
\end{claim}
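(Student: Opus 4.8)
The plan is to expand $\Cov{O_1, O_2}$ by the bilinearity of covariance and then kill every cross term using the independence of inputs drawn from disjoint participant sets. First I would substitute $O_1 = X_T + X_{S_0} + X_{S_1}$ and $O_2 = X_T + X_{S_0} + X_{S_2}$ and distribute, yielding nine pairwise terms of the form $\Cov{X_G, X_H}$ with $G$ ranging over $\{T, S_0, S_1\}$ (the summands of $O_1$) and $H$ over $\{T, S_0, S_2\}$ (the summands of $O_2$).

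Next I would appeal to the modeling assumption that $T$, $S_0$, $S_1$, $S_2$ are pairwise disjoint sets of participants whose inputs are mutually independent; hence the group sums $X_T, X_{S_0}, X_{S_1}, X_{S_2}$ are mutually independent and $\Cov{X_G, X_H} = 0$ whenever $G \ne H$. The only surviving contributions come from the components shared between the two executions, namely $X_T$ (which the target enters into both) and $X_{S_0}$ (the spectators common to both rounds). These contribute $\Cov{X_T, X_T} = \Var{X_T} = \sigma_T^2$ and $\Cov{X_{S_0}, X_{S_0}} = \Var{X_{S_0}} = \sigma_{S_0}^2$, so $\Cov{O_1, O_2} = \sigma_T^2 + \sigma_{S_0}^2$.

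Finally I would dispatch the degenerate case: when $S_0 = \emptyset$ the sum $X_{S_0}$ is empty, so $\sigma_{S_0}^2 = 0$ and the covariance collapses to $\sigma_T^2$, which is exactly the second branch of the claim. The computation is entirely routine; the only point needing a moment of care — and the closest thing to an obstacle — is verifying that correlation cannot leak through $S_1$ or $S_2$. This holds because $S_1$ appears only in $O_1$ and $S_2$ only in $O_2$, and both are disjoint from each other and from $T \cup S_0$, so the corresponding cross covariances are genuinely zero and the shared-variable terms are precisely $X_T$ and $X_{S_0}$.
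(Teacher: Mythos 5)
Your proposal is correct and follows essentially the same route as the paper: both expand $\Cov{O_1, O_2}$ into the nine pairwise terms arising from the summands of $O_1$ and $O_2$, annihilate the cross terms using the mutual independence of the disjoint groups $T, S_0, S_1, S_2$, and retain only the shared contributions $\Var{X_T} = \sigma_T^2$ and $\Var{X_{S_0}} = \sigma_{S_0}^2$. The only difference is cosmetic --- you invoke bilinearity of covariance directly, whereas the paper carries out the equivalent computation at the level of expectations and means before the cross terms cancel against $\mu_1\mu_2$.
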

\newcommand\proofthree{
	\begin{equation*}
		\begin{split}
			\Cov{O_1, O_2}  &= \Ex{(O_1 - \mu_1)(O_2 - \mu_2)} \ifCONF\breakcmd \else \fi
			= \Ex{O_1O_2 - \mu_2 O_1 - \mu_1 O_2  + \mu_1 \mu_2 } \\
			&= \Ex{\lp{X_T {+} X_{S_{12}} + X_{S_1}}\lp{X_T +X_{S_{12}}+  X_{S_2}}} 
			\ifCONF \breakcmd\quad
			\else \breakcmd \quad\fi - \Ex{
			\mu_2 \lp{X_T + X_{S_{12}} +X_{S_1}}}  	\ifCONF\breakcmd\quad \else  \fi 
			\ifCONF  \else \fi-\Ex{\mu_1 \lp{X_T + X_{S_{12}} +X_{S_2}}} + \Ex{\mu_1 \mu_2} \\ 
			&= \Ex{X_T^2}  + \Ex{X_{S_{12}}^2} +2\Ex{X_T X_{S_{12}}} +\Ex{X_T X_{S_1}}  \ifCONF
			\breakcmd\qquad \else\fi+ \Ex{X_T X_{S_2}} + \Ex{X_{S_{12}} X_{S_1}} + \Ex{X_{S_{12}} X_{S_2}} \\
			&\qquad + {\Ex{X_{S_1} X_{S_2}}} -  \mu_2 (\overbrace{\Ex{X_T} + \Ex{X_{S_{12}}} +\Ex{X_{S_1}}}^{\mu_1} ) \ifCONF \breakcmd\qquad- \else - \fi \mu_1 (\underbrace{\Ex{X_T} + \Ex{X_{S_{12}}} +\Ex{X_{S_2}}}_{\mu_2}) + \mu_1 \mu_2
		\end{split}
	\end{equation*}
	Exploiting the definition of variance $ \Ex{X^2} = \sigma_X^2  + \mu_X^2$ and fundamental properties of expectation:
	\begin{align*}
		&\Cov{O_1, O_2} = \sigma_T^2 + \sigma_{S_{12}}^2 - \mu_1 \mu_2 \ifCONF \breakcmd\quad \else \fi+  \underbrace{ \mu_T^2  + \mu_{S_{12}}^2  + 2\mu_T\mu_{S_{12}} +\mu_T\mu_{S_1} + \mu_T\mu_{S_2} + \mu_{S_{12}}\mu_{S_1} + \mu_{S_{12}}\mu_{S_2} +\mu_{S_1}\mu_{S_2}}_{= \mu_1 \mu_2} \\
		               & \ifCONF\else\myquad[5]\fi= \sigma_T^2 + \sigma_{S_{12}}^2 .
	\end{align*}
	Clearly, if $S_{12}= \emptyset$, the above result simplifies to~${\text{Cov}\left[O_1, O_2\right] = \sigma_T^2}$. This result is intuitive since the covariance measures the strength of correlation between two random variables, and $O_1$ and $O_2$ are both comprised of $X_T$ and $X_{S_{12}}$.}
\ifCONF The proof is available in the full version of the text~\cite{baccarini2022understanding}.
\else
\begin{proof}
\proofthree
\end{proof}
\fi
The final parameters of the bivariate distribution $\vec{O}$ are
\begin{align*}
	\bm{\mu}_{\vec{O}}
	=\begin{pmatrix}
		 \mu_1 \\
		 \mu_2
	 \end{pmatrix},
	\bm{\Sigma}_{\vec{O}}
	=
	\begin{pmatrix}
		\sigma_{1}^2           & \sigma_T^2 +\sigma_{S_{12}}^2 \\
		\sigma_T^2 +\sigma_{S_{12}}^2 & \sigma_{2}^2
	\end{pmatrix}.
\end{align*}
Repeating this procedure for the spectator joint distribution $\vec{S}$ yields a similar set of parameters:
\begin{align*}
	\bm{\mu}_{\vec{S}}
	=\begin{pmatrix}
		 \mu_{S_{12}} +  \mu_{S_1}\\
		 \mu_{S_{12}} +  \mu_{S_2}
	 \end{pmatrix},
	\bm{\Sigma}_{\vec{S}}
	=
	\begin{pmatrix}
		\sigma_{S_{12}}^2  + \sigma_{S_1}^2         &\sigma_{S_{12}}^2 \\
		\sigma_{S_{12}}^2 & \sigma_{S_{12}}^2  + \sigma_{S_2}^2
	\end{pmatrix}.
\end{align*}
Equipped with expressions for $\bm{\Sigma}_{\vec{O}}$ and $\bm{\Sigma}_{\vec{S}}$, we are prepared to begin our experimental analysis of $h(X_T \mid O_1, O_2)$.

\subsection{Experimental Evaluation}
\label{sub:experimental_evaluation}

The above allows us to experimentally evaluate the target's entropy loss for when inputs are normally distributed. We use normal distribution $\N(0,4)$ 
to reasonably approximate the log-normal distribution with real data. Once again, $\norm{T} = 1$ for concreteness and we let $\norm{S_1} = \norm{S_2}$ in all experiments, i.e., the number of spectators is the same in both executions.

\begin{figure*}[t]
    \begin{small}
			\centering
			\begin{subfigure}[t]{0.32\textwidth} \centering
				\includegraphics[width=0.99\textwidth]{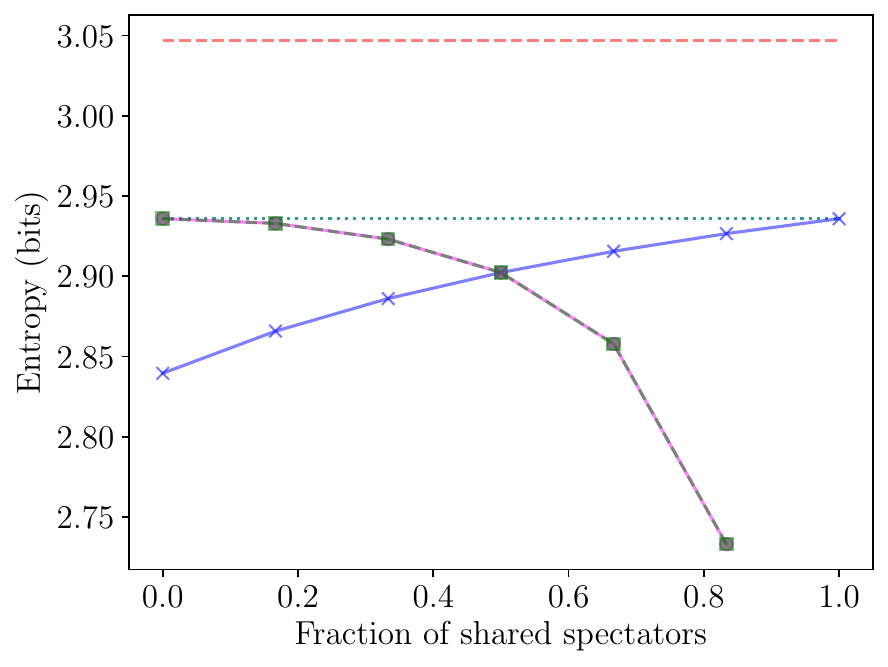}
				\caption{$n = 6$ spectators per execution.}
				\label{fig:one_vs_two_exp_mult_6}
			\end{subfigure}
				\begin{subfigure}[t]{0.32\textwidth} \centering
					\includegraphics[width=0.99\textwidth]{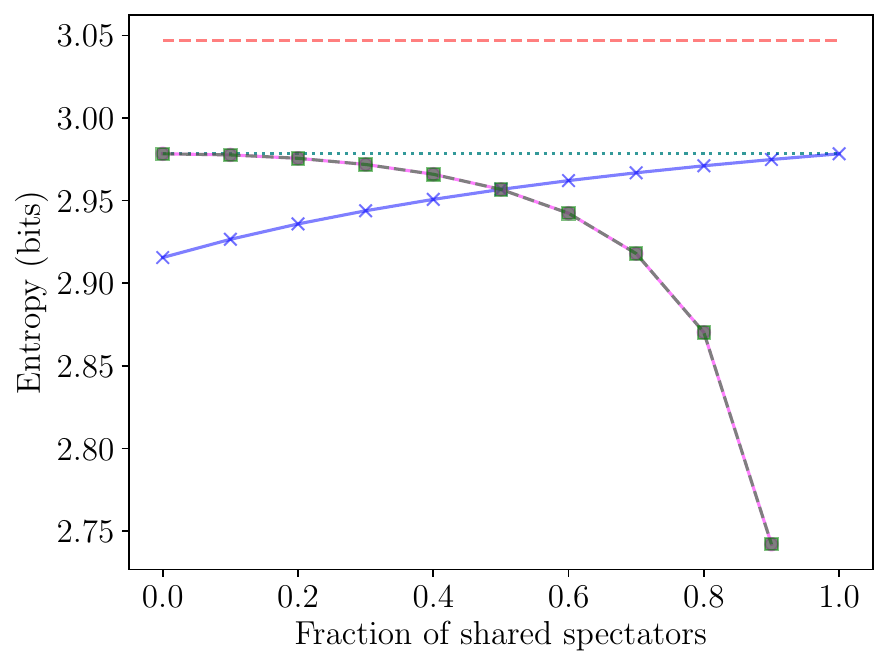}
					\caption{$n = 10$ spectators per execution.}
					\label{fig:one_vs_two_exp_mult_10}
				\end{subfigure}
				\begin{subfigure}[t]{0.32\textwidth} \centering
					\includegraphics[width=0.99\textwidth]{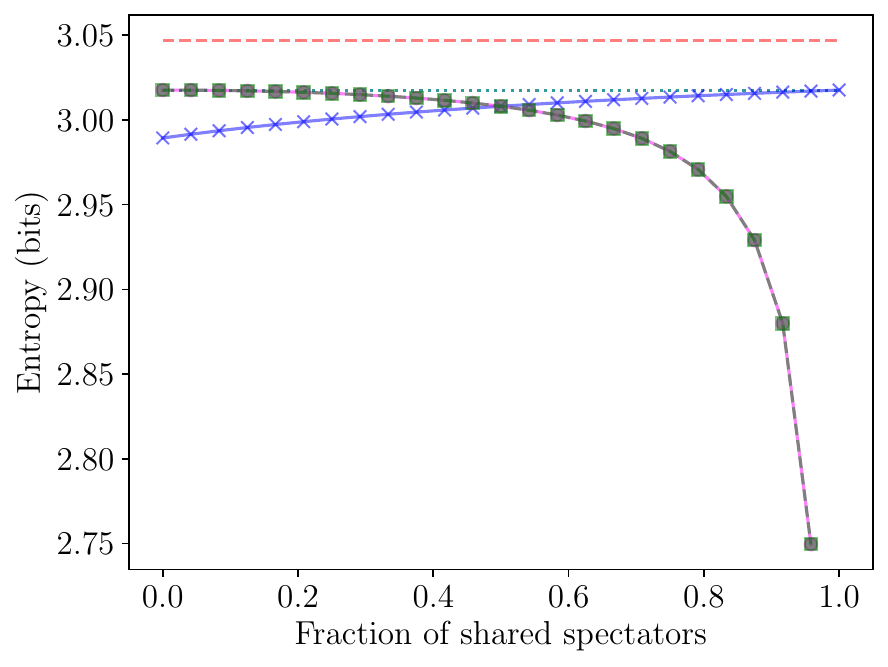}
					\caption{$n = 24$ spectators per execution.}
					\label{fig:one_vs_two_exp_mult_24}
				\end{subfigure}
\ifCONF
\begin{subfigure}[t]{.75\textwidth} \centering
\else 
\begin{subfigure}[t]{.85\textwidth} \centering
\fi
\includegraphics[width=1.0\textwidth]{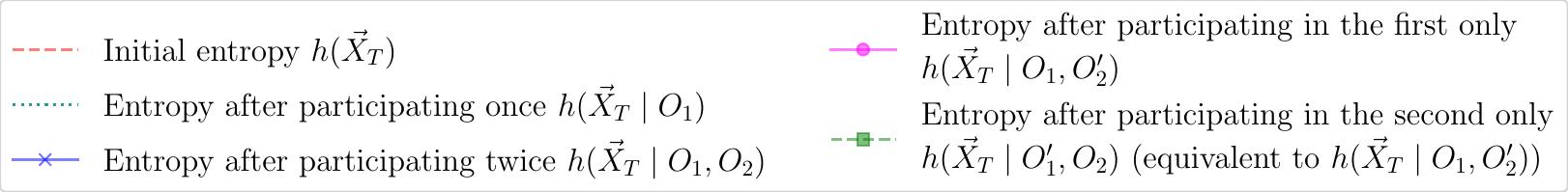}
				\end{subfigure}

        \caption{Target information loss after participating in one or two computations.
		Omitted: if the target participates in one experiment and all the shared spectators are reused, then $h(X_T \mid O_1, O_2') =0$.}
        \label{fig:one_vs_two_exp}
    \end{small}
\end{figure*}

It is informative to analyze information loss as the fraction of shared spectators changes and we do so for three different computation sizes. To be as close to the setup that guarantee 1\%--5\% entropy loss for the log-normal distribution (5--24 spectators), we choose to execute our experiments with 6, 10, and 24 spectators (where having an even number is beneficial for illustration purposes). This corresponds to the number of non-adversarial participants when the target is absent and the number of non-adversarial participants is one higher when the target is participating.

We display the following information in Figure~\ref{fig:one_vs_two_exp}:
\begin{itemize}
\item the target's initial entropy $h(\vec{X}_T)$,
\item the target's entropy after a single execution $h(\vec{X}_T \mid O_1)$, 
\item the target's entropy after participating twice $h(\vec{X}_T | O_1, O_2)$,
\item the target's entropy after participating in one of the two executions, i.e., $h(\vec{X}_T \mid O_1, O_2')$ and $h(\vec{X}_T \mid O'_1, O_2)$
\end{itemize}
and plot the values as a function of the fractional overlap between two executions for a given number of spectators.

Naturally, the value of $h(\vec{X}_T \mid O_1)$ remains constant when the number of participants is fixed. We observe that when participating twice, $h(\vec{X}_T \mid O_1, O_2)$ converges to $h(\vec{X}_T \mid O_1)$ as the fraction of shared spectators increases. This is expected because at 100\% overlap, we are functionally calculating $h(\vec{X}_T \mid O_1, O_1) = h(\vec{X}_T \mid O_1)$. 
\ifCONF
\else
We formalize this into the claim:
\begin{claim} \label{two_exp_two_part}
If the target participates in both evaluations and 100\% of the spectators are reused, $h(\vec{X}_T \mid O_1, O_2) = h(\vec{X}_T \mid O_1)$.
\end{claim}
\fi
\newcommand\prooftwoexptwopart{
We begin by analyzing the absolute loss
between the first and second evaluations when the target participates twice, namely:
\begin{align*}
 h(\vec{X}_T \mid O_1) -  h(\vec{X}_T\mid O_1, O_2).
  \end{align*}
Assume all participants' inputs are normally distributed ($X_i \sim \N(0, \sigma^2)$). Denote $p = \norm{P}$ as the size of an arbitrary group $P$ (e.g., $s_{12} = \norm{S_{12}}$), such that $X_P \sim \N(0, p\sigma^2)$. 
Simplifying the absolute loss between the first and second evaluations, we obtain:
\begin{align*}
  h(\vec{X}_T \mid O_1)  -  h(\vec{X}_T\mid O_1, O_2)  &= h(\vec{X}_T ) +  h(X_{S_{12}} + X_{S_1})-  h(X_{T} + X_{S_{12}} + X_{S_1})  \\ &\myquad[10]- \lp{  h(\vec{X}_T) + h(X_{S_{12}} + X_{S_1} ,  X_{S_{12}}+ X_{S_2} ) - h( O_1, O_2)} \\ 
  &=  h(X_{S_{12}} + X_{S_1})-  h(X_{T} + X_{S_{12}} + X_{S_1})  + h( O_1, O_2) - h(X_{S_{12}} + X_{S_1} ,  X_{S_{12}}+ X_{S_2} ).
\end{align*}
Using the definitions from Section~\ref{sub:BivariateandTrivariatenormalDistribution}, we calculate the remaining terms as 
\begin{align*}
   h(X_{S_{12}} + X_{S_1})  +h(X_{T} + X_{S_{12}} + X_{S_1}) &= \frac12 \log \lp{\frac
  {s_{12} +s_1}
  {t +s_{12} + s_1}
  }\\
  h( O_1, O_2) &= \frac12 \log \lp{(2\pi e)^2 ((t + s_{12})(s_1 + s_2) + s_1 s_2) \sigma^2} \\
  h(X_{S_{12}} + X_{S_1} ,  X_{S_{12}}+ X_{S_2} ) &= \frac12 \log \lp{(2\pi e)^2 (s_{12} (s_1 + s_2) + s_1 s_2) \sigma^2} \\
   h( O_1, O_2) - h(X_{S_{12}} + X_{S_1} ,  X_{S_{12}}+ X_{S_2} )  &= \frac12\log \lp{
    \frac
    {(t + s_{12})(s_1 + s_2) + s_1 s_2}
    {s_{12} (s_1 + s_2) + s_1 s_2}
  }
\end{align*}
Therefore, the absolute entropy loss between the first and second evaluations is
\begin{align*}
h(\vec{X}_T \mid O_1) -  h(\vec{X}_T\mid O_1, O_2) &= \lp{
  \frac12 \log \lp{\frac
  {s_{12} +s_1}
  {t +s_{12} + s_1}
  }
} 
+ \frac12\log \lp{
  \frac {(t + s_{12})(s_1 + s_2) + s_1 s_2}
  {s_{12} (s_1 + s_2) + s_1 s_2}
}
   \\
  &=
  \frac12\log \lp{\lp{\frac
  {s_{12} +s_1}
  {t +s_{12} + s_1}
  } \lp{
  \frac {t(s_1 + s_2)+ s_{12}(s_1 + s_2) + s_1 s_2}
  {s_{12} (s_1 + s_2) + s_1 s_2}
}}.
\end{align*}
Since we assume $s_1 = s_2$, the above expression simplifies to 
\begin{align*}
h(\vec{X}_T \mid O_1) &-  h(\vec{X}_T\mid O_1, O_2) =
  \frac12\log \lp{\lp{\frac
  {s_{12} +s_1}
  {t +s_{12} + s_1}
  } \lp{
  \frac {2t+ 2s_0+ s_1}
  {2s_0  + s_1 }
}}.
\end{align*}
This function is monotonically decreasing when the total number of spectators is fixed to $s_{12} + s_1$, and we vary the ratio  $\frac{s_{12}}{s_1} \in \lb{0,1}$, which is consistent with our observation that the absolute loss will converge to $h(\vec{X}_T \mid O_1)$.
}
\ifCONF
\else
\begin{proof}
	\prooftwoexptwopart
\end{proof}
\fi
Conversely, increasing the fraction of the overlap has the inverse effect for $h(\vec{X}_T \mid O_1, O_2')$, causing it to trend downward. At 100\% overlap, $h(\vec{X}_T \mid O_1, O_2') = 0$ (point omitted from the plots). This is a consequence of effectively computing  $h(\vec{X}_T \mid O_1, X_{S_{12}})$:
\begin{align*}
	h(\vec{X}_T | O_1, X_{S_{12}}) &= h(\vec{X}_T, O_1, X_{S_{12}}) {-} h(O_1, X_{S_{12}})\\
	&= h(\vec{X}_T) {+} h(X_{S_{12}}) {-} \lp{h({X}_T {+} X_{S_{12}} \mid X_{S_{12}}) {+} h(X_{S_{12}})}\\
	&{=} h(\vec{X}_T) {+} h(X_{S_{12}}) {-} \lp{h({X}_T) {+} h(X_{S_{12}})} {=} h(\vec{X}_T) {-} h({X}_T).
\end{align*}
When $\norm{T} = 1$, then $h(\vec{X}_T) = h({X}_T)$, thus reducing the above equation to zero. This informs us that the output of the second computation $O_2'$ without any unique spectators reveals the target's information entirely.
\ifCONF
\else
We state this observation as follows:
\begin{claim} \label{two_exp_one_part}
If the target participates in one evaluation and 100\% of the spectators are reused, $h(\vec{X}_T \mid O_1, O_2') = 0$.
\end{claim}
\fi
\ifCONF
We analytically prove these observations
in the full version of the text by deriving exact expressions for the absolute entropy loss.
\else

\newcommand\prooftwoexponepart{
Next, we examine the absolute entropy loss between the first and second evaluations when the target participates in only the first evaluation:
\begin{align*}
h(\vec{X}_T \mid O_1) -  h(\vec{X}_T\mid O_1, O_2')
\end{align*}
The only difference from the prior calculation arises is replacing $h(O_1, O_2)$ with $h(O_1, O_2')$, which evaluates to 
\begin{align*}
  h( O_1, O_2) &= \frac12 \log \lp{(2\pi e)^2 (t(s_{12} + s_2) + s_{12}(s_1 + s_2) + s_1 s_2) \sigma^2},
\end{align*}
such that our final expression is 
\begin{align*}
h(\vec{X}_T \mid O_1) -  h(\vec{X}_T\mid O_1, O_2') 
    &=  \frac12\log \lp{\lp{\frac{s_{12} +s_1}{t +s_{12} + s_1}} \lp{
  \frac{t(s_{12} + s_2)+ s_{12}(s_1 + s_2) + s_1 s_2}
  {s_{12} (s_1 + s_2) + s_1 s_2}
}} \\ 
&=  \frac12\log \lp{\lp{\frac
{s_{12} +s_1}
{t +s_{12} + s_1}
} \lp{
\frac {t(s_{12} + s_1) + s_1(2s_0+ s_1)}
{s_1(2s_0  + s_1) }
}}.
\end{align*}
This function equals to infinity when $s_1 = s_2 = 0$, which is confirms that the output of the second computation $O_2'$ without the presence of any unique spectators reveals the target's information entirely. 
}
\ifCONF
\else
\begin{proof}
	\prooftwoexponepart
\end{proof}
\fi

A passive result of both proofs is that all forms of absolute loss are parameter-independent, which is consistent with Claim~\ref{claim2}. 

\fi

Our next observation pertains to the
point of intersection where $h(\vec{X}_T \mid O_1, O_2) = h(\vec{X}_T \mid O_1, O_2')$, which  occurs when 50\% of the spectators are shared across the computation. This appears for the special case when the total number of spectators in a single evaluation is even. Concretely, 
we compare 
\begin{align}
	\begin{split}
		h(\vec{X}_T{\mid}  O_1, O_2) & = h(\vec{X}_T, O_1, O_2) - h(O_1, O_2),\\
		h(\vec{X}_T{\mid}  O_1, O_2') & = h(\vec{X}_T, O_1, O_2') - h(O_1, O_2').
	\end{split}	
	\label{eq:overlap_eqs}
\end{align}
It can be shown using the procedure outlined in Section~\ref{sec:jcon_entropy} that  $h(\vec{X}_T, O_1, O_2) = h(\vec{X}_T, O_1, O_2')$. Therefore, we prove 
the following: 
\begin{claim} \label{claim5}
  With normally distributed inputs, the terms $ h(O_1, O_2)$ and $ h(O_1, O_2')$ are equal when $|S_{12}| = |S_1|$.
\end{claim}

	\newcommand\prooffive{
	Following the steps used to derive the covariance matrix of $\vec{O} = (O_1, O_2)$,  the covariance matrix of $\vec{O'} = (O_1, O_2')$  is  
  \begin{align*}
	  \bm{\Sigma_{\vec{O'}}}
	  =
	  \begin{pmatrix}
		  \sigma_{T}^2  + \sigma_{S_{12}}^2  + \sigma_{S_1}^2         &\sigma_{S_{12}}^2 \\
		  \sigma_{S_{12}}^2 & \sigma_{S_{12}}^2  + \sigma_{S_2}^2
	  \end{pmatrix}.
  \end{align*}
  Recall that the differential entropy of the multivariate normal is $h(\vec{X}) = \frac12 \log \lp{\lp{2 \pi e}^{k} \det\bm{\Sigma}}$. The sole object of interest is the $\det\bm{\Sigma}$ term, as the remainder contribute a constant factor. We calculate 
  \begin{align*}
	  \det \bm{\Sigma_{\vec{O}}}
	   \ifCONF \eqanchor \else \fi = 
	  (\sigma_{T}^2  + \sigma_{S_{12}}^2  + \sigma_{S_1}^2 )(\sigma_{T}^2  + \sigma_{S_{12}}^2  + \sigma_{S_2}^2 ) - (\sigma_{T}^2  + \sigma_{S_{12}}^2 )^2 
	  \ifCONF \breakcmd \else \fi
	  = \sigma_{T}^2 (\sigma_{S_1}^2  + \sigma_{S_2}^2 ) +  \sigma_{S_{12}}^2 (\sigma_{S_1}^2  + \sigma_{S_2}^2 ) +  \sigma_{S_1}^2  \sigma_{S_2}^2.
  \end{align*}
  Similarly,
  \begin{align*}
	  \det \bm{\Sigma_{\vec{O'}}} 
	\ifCONF \eqanchor \else \fi = 
	(\sigma_{T}^2  + \sigma_{S_{12}}^2  + \sigma_{S_1}^2 )(\sigma_{S_{12}}^2  + \sigma_{S_2}^2 ) - \sigma_{S_{12}}^4 
	\ifCONF \breakcmd \else \fi
	  = \sigma_{T}^2 (\sigma_{S_{12}}^2  + \sigma_{S_2}^2 ) +  \sigma_{S_{12}}^2 (\sigma_{S_1}^2  + \sigma_{S_2}^2 ) +  \sigma_{S_1}^2  \sigma_{S_2}^2.
  \end{align*}
  Therefore, the equality  $h(\vec{X}_T \mid O_1, O_2) = h(\vec{X}_T \mid O_1, O_2')$  is satisfiable if and only if $\sigma_{S_{12}}^2 = \sigma_{S_1}^2$, which occurs when $\norm{S_{12}} = \norm{S_1}$.
  }
  \ifCONF 
  \begin{proof}
	\prooffive
\end{proof}
  \else
\begin{proof}[Proof of Claim~\ref{claim5}]
	\prooffive
\end{proof}
\fi

\ignore{If a target is aware of the percentage of overlapping spectators, they can determine their best course of action by calculating the information loss after the second execution (with and without their participation), relative to the loss after the first execution. Specifically,
\begin{align*}
\frac{h(\vec{X}_T) -  h(\vec{X}_T\mid O_1) - (h(\vec{X}_T \mid O_1) -  h(\vec{X}_T\mid O_1, O_2))}{h(\vec{X}_T) -  h(\vec{X}_T\mid O_1)},
\end{align*} 
and
\begin{align*}
	\frac{h(\vec{X}_T) -  h(\vec{X}_T\mid O_1) - (h(\vec{X}_T \mid O_1) -  h(\vec{X}_T\mid O_1, O_2'))}{h(\vec{X}_T) -  h(\vec{X}_T\mid O_1)}.
\end{align*}
}
As computation designers, we can minimize information disclosure for all participants by targeting 50\% participants' overlap between the first and second executions. For the configurations in Figure~\ref{fig:one_vs_two_exp},
at 50\% overlap, the percentages of information loss from the second evaluation relative to the first evaluation are comparable for the selected number of spectators $n$ (30.18\% for $n=6$, 31.3\% for $n=10$, and 32.45\% for $n=24$). This corresponds to the intersection points in Figure~\ref{fig:one_vs_two_exp}. 

As we may be unable to guarantee that exactly 50\% of participants overlap between two executions, we can increase our tolerance for entropy loss by inviting more participants and building a buffer to accommodate overlaps in a range, e.g., 40--60\%. Using data in Figure~\ref{fig:one_vs_two_exp}, this information is convenient to gather for $n=10$. That is,
if we increase the fraction of overlapping spectators, single-participation targets are most at risk. The converse is true if the overlap decreases -- the target suffers less exposure from participating one evaluation. Table~\ref{tab:loss_ratio} summarizes the results. 
\begin{table}[t]
\centering
\begin{tabular}[c]{|c|c|c|c|} \hline
Number of evaluations   &\multicolumn{3}{c|}{Spectator overlap}\\ \cline{2-4}
the target participates in & 40\% & 50\% & 60\% \\ \hline
One &18.0\% &31.3\% & 52.3\% \\
Two &40.1\% &31.3\% & 23.5\% \\ \hline
\end{tabular}
\caption{Percentage of information loss after two executions relative to a single execution for $n = 10$.}
\label{tab:loss_ratio}
\end{table}
This means that performing two executions in the worst case costs a participant entropy loss 1.5 times higher than if only a single computation is executed. As a result, with the target entropy loss of 5\% and 1\%, we need to increase the number of spectators from 5 and 24 to 7 and 33, respectively. 

We note that our analysis of repeated executions applies only when the inputs of the participants in the overlapping set of participants do not change. And if the executions are distant enough in time that the participants' inputs significantly change, they would no longer be treated as repeated dependent executions.

\ifCONF 
In the full version of the text~\cite{baccarini2022understanding} we conduct additional two-evaluation experiments, such as adjusting the number of shared spectators while maintaining a fixed  number of unique spectators.
\else
\subsection{Additional Two Executions Experiments}
\label{sub:two-exec-exp}

\begin{figure*}
\centering
\begin{subfigure}[t]{0.45\textwidth} \centering
	\includegraphics[width=0.9\textwidth]{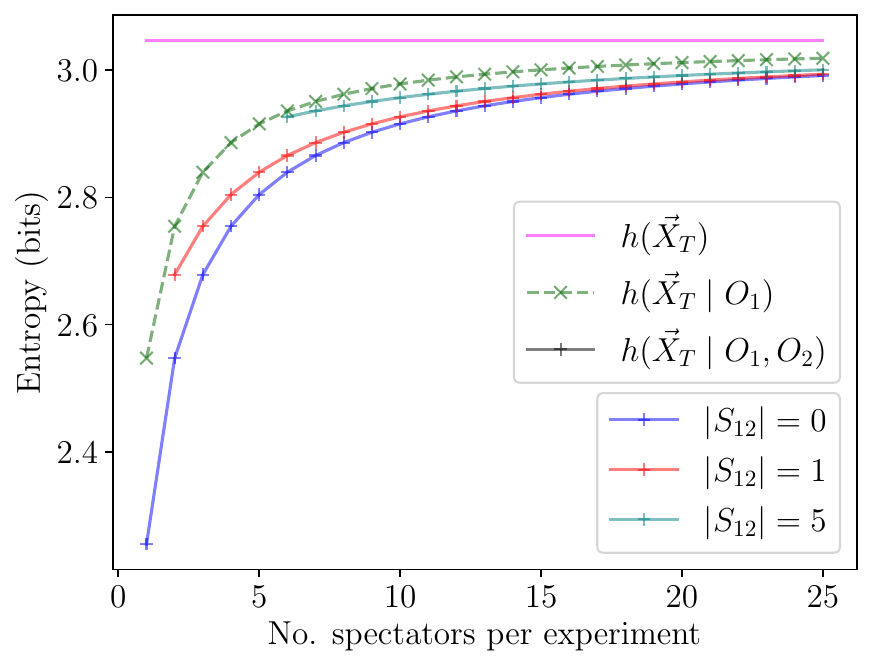}
    \caption{Computing $h(X_T)$, $h(X_T\mid O_1)$, and $h(X_T \mid O_1, O_2)$ for several $\norm{S_{12}}$ sizes.}
    \label{fig:joint_sum_normal_vary_shared_and_unique_a}
\end{subfigure}
    \begin{subfigure}[t]{0.45\textwidth} \centering
	\includegraphics[width=0.9\textwidth]{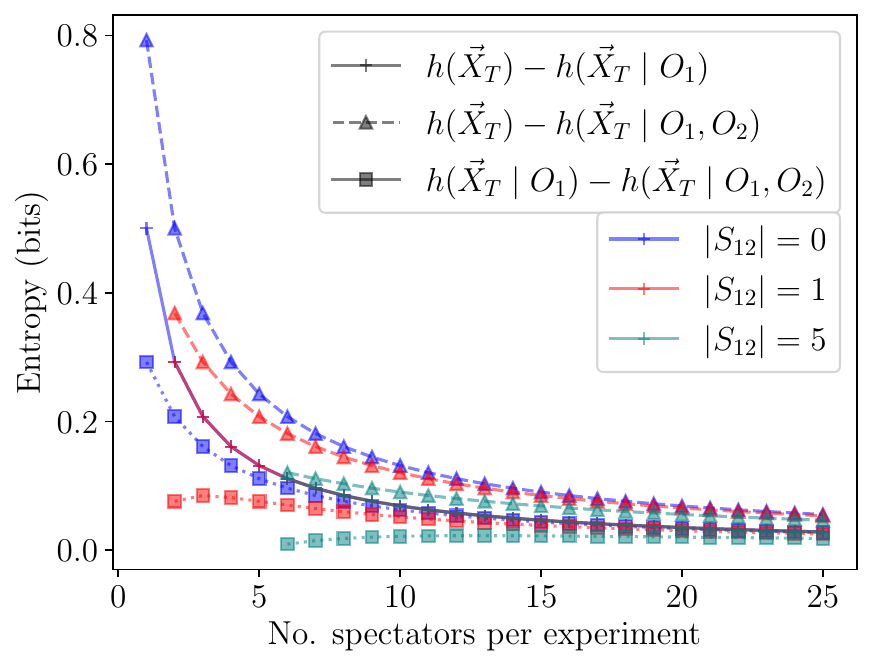}
	\caption{Absolute entropy loss.}
        \label{fig:loss_joint_sum_normal_vary_shared_and_unique_16}
    \end{subfigure}

    \caption{Comparing the relative and absolute entropy losses of participants with normally distributed inputs. The number of spectators per experiment on the $x$-axis is computed as $\norm{S_{12} \cup S_{1}} = | S_{12} \cup S_2|$, starting with $\norm{S_1} = \norm{S_2} = 1$.}
	\label{fig:joint_sum_normal_vary_shared_and_unique}
\end{figure*}

We examine the impact of shared spectators' presence on target's information loss. In Figure~\ref{fig:joint_sum_normal_vary_shared_and_unique_a}, we plot: 
\begin{itemize}
\item the target's initial entropy $h(\vec{X}_T)$,
\item the target's entropy after a single execution $h(\vec{X}_T \mid O_1)$, and
\item the target's entropy after two executions $h(\vec{X}_T \mid O_1, O_2)$ with a different number of spectators participating in both executions.
\end{itemize}
We vary the total number of spectators per experiment $\norm{S_{12} \cup S_{1}} = | S_{12} \cup S_2|$ on the $x$-axis, starting with one unique spectator per experiment $\norm{S_1} = \norm{S_2} = 1$. 
The $h(\vec{X}_T \mid O_1, O_2)$ curves correspond to awae after two executions and start when the number of participants reaches their respective number of shared spectators $\norm{S_{12}}$ in order to make an accurate comparison. A single curve for $h(\vec{X}_T \mid O_1)$ suffices since it does not use the notion of shared spectators. 

We observe in Figure~\ref{fig:joint_sum_normal_vary_shared_and_unique_a} that the larger the number of shared spectators for a given $\norm{S_0}$  is, the less information is revealed about the target. These spectators function as ``noise'' that protects the target. The protection offered by a small number of shared spectators becomes less pronounced as the number of participants grows. 

We also compute and present in Figure~\ref{fig:loss_joint_sum_normal_vary_shared_and_unique_16} the target's absolute entropy loss for the following experiments:
\begin{itemize}
\item after a single execution $h(\vec{X}_T) - h(\vec{X}_T \mid O_1)$,
\item after two executions $h(\vec{X}_T) - h(\vec{X}_T \mid O_1, O_2)$, and 
\item after the second execution $h(\vec{X}_T \mid O_1) - h(\vec{X}_T \mid O_1, O_2)$
\end{itemize}
using a varying number of shared spectators $|S_{12}|$. We see that for each fixed number of shared spectators $\norm{S_{12}}$, the absolute loss as a result of the first participation ($h(\vec{X}_T) - h(\vec{X}_T \mid O_1)$) is greater than the absolute loss of the second participation ($h(\vec{X}_T \mid O_1) - h(\vec{X}_T \mid O_1, O_2)$). With no shared spectators the curves converge at about 15 participants per execution, while increasing the number of shared spectators causes the curves to converge at a slower rate. 
\fi
\ifCONF
Furthermore, we extend our analysis to three or more executions.
\else \fi 

\ifCONF 
\else 
\subsection{Mixed Distribution Parameters for Two Executions}
\label{sub:mixed_distribution_parameters_two_eval}

In Section~\ref{sub:mixed_distribution_parameters}, we examined how our conclusions changed when we generalize our analysis to non-identically distributed participant inputs. We are similarly interested in how this affects our two-execution analysis.
Recalling the formulation of our two-evaluation setting where spectators are present in the first, second, or both executions, we now consider these spectator subsets can be further partitioned into sub-subsets based on their group identities.
Combining our two-evaluation notation and that of Section~\ref{sub:mixed_distribution_parameters}, we define $ P_{k,\GG} \of P$ as the set of participants present in execution(s) $k = \lbr{1, 2, 12}$ belonging to group $\GG \in \grp$. For example, $k = 1$ means participation in the first execution, while $k = 12$ means participation in both.
The sum of these participants' inputs is 
\begin{align*}
    X_{P_{k}} = \sum_{\GG \in \grp} \sum_{i \in P_{k,\GG}} X_{P_{k,i}} =  \sum_{\GG \in \grp} X_{P_{k,\GG}},
\end{align*}
where  $ X_{P_{\GG}} =   \sum_{i \in P_{k,\GG}} X_{P_{k,i}}$. 

\medskip\noindent
\textbf{Optimal setup for minimizing information disclosure.}
In Section~\ref{sub:experimental_evaluation}, we determined that the point of intersection  of the entropies $ h(\vec{X}_T\mid O_1, O_2 )$ and $ h(\vec{X}_T\mid O_1, O_2' )$ at 50\% participant overlap
provides the best level of protection for all types of targets, and moving in either direction (increasing or decreasing the overlap) causes the leakage to increase. We revisit our proof of Claim~\ref{claim5} in our generalized setting to determine whether the equality $h(\vec{X}_T\mid O_1, O_2 )= h(\vec{X}_T \mid O_1, O_2')$ still holds when 50\% of the spectators are shared across the computation, i.e., $|S_{12}| = |S_1|$.

Considering the expansions of  $ h(\vec{X}_T\mid O_1, O_2 ) $ and $h(\vec{X}_T \mid O_1, O_2')$ in Equation~\ref{eq:overlap_eqs}, it can be shown that the terms $h(\vec{X}_T, O_1, O_2)$ and $h(\vec{X}_T, O_1, O_2')$ are equal, leaving  $ h(O_1, O_2)$ and $ h(O_1, O_2')$ for us to compare. We perform this analysis under Case 1 (participant group identities are known), since the proof of Claim~\ref{claim5} relies on the existence of the closed-form expression for the differential entropy.

Computing $ h(O_1, O_2)$ and $ h(O_1, O_2')$ in the generalized setting requires re-formalizing the covariance matrices of the joint random variables $\vec{O} = (O_1, O_2)$ and $\vec{O'} = (O_1, O_2')$.  Performing the steps outlined in Section~\ref{sub:BivariateandTrivariatenormalDistribution} yields the following:
\begin{align*}
	\bm{\Sigma}_{\vec{O}}
	&= 
	\begin{pmatrix}
		\sum_{\GG \in \grp}\sigma_{\GG}^2\lp{\norm{T_{\GG}} {+}  \norm{S_{12,\GG}} {+} \norm{S_{1,\GG}}   } & \sum_{\GG \in \grp}\sigma_{\GG}^2\lp{ \norm{T_{\GG}} {+} \norm{S_{12,\GG}}} \\
		\sum_{\GG \in \grp}\sigma_{\GG}^2\lp{\norm{T_{\GG}} {+}\norm{S_{12,\GG}}} & \sum_{\GG \in \grp}\sigma_{\GG}^2\lp{\norm{T_{\GG}} {+}\norm{S_{12,\GG}} {+} \norm{S_{2,\GG}}   }
	\end{pmatrix} \\
	\bm{\Sigma}_{\vec{O'}}
	&= 
	\begin{pmatrix}
		\sum_{\GG \in \grp}\sigma_{\GG}^2\lp{\norm{T_{\GG}}  {+}  \norm{S_{12,\GG}} {+} \norm{S_{1,\GG}}   } & \sum_{\GG \in \grp}\sigma_{\GG}^2\norm{S_{12,\GG}} \\
		\sum_{\GG \in \grp}\sigma_{\GG}^2\norm{S_{12,\GG}} & \sum_{\GG \in \grp}\sigma_{\GG}^2\lp{ \norm{S_{12,\GG}} {+} \norm{S_{2,\GG}}   }
	\end{pmatrix}.
\end{align*}
Given the definition of the differential entropy of a multivariate normal as $h(\vec{X}) = \frac12 \log \lp{\lp{2 \pi e}^{k} \det\bm{\Sigma}}$, we compute the  determinants of the above matrices as:
\begin{align*}
	\det \bm{\Sigma_{\vec{O}}} &=
	\lp{\sum_{\GG \in \grp}\sigma_{\GG}^2\lp{\norm{T_{\GG}}  {+}  \norm{S_{12,\GG}} {+} \norm{S_{1,\GG}}  }} \lp{\sum_{\GG \in \grp}\sigma_{\GG}^2\lp{\norm{T_{\GG}}  {+}  \norm{S_{12,\GG}} {+} \norm{S_{2,\GG}}   }}
	- \lp{\sum_{\GG \in \grp}\sigma_{\GG}^2\lp{ \norm{T_{\GG}} {+} \norm{S_{12,\GG}}} }^2 \\ 
	&=	
	\sum_{\GG \in \grp}\sigma_{\GG}^2 \norm{T_{\GG}} \lp{ \sum_{\GG \in \grp}\lp{\sigma_{\GG}^2\lp{\norm{S_{1,\GG}} {+} \norm{S_{2,\GG}}   }} } 
+	\sum_{\GG \in \grp}\sigma_{\GG}^2 \norm{S_{(12,\GG)}} \lp{ \sum_{\GG \in \grp}\lp{\sigma_{\GG}^2\lp{\norm{S_{1,\GG}} {+} \norm{S_{2,\GG}}   }} } \\ 
&\myquad[4]
 + \sum_{\GG \in \grp}\sigma_{\GG}^2\lp{  \norm{S_{1,\GG}}}	\sum_{\GG \in \grp}\sigma_{\GG}^2\lp{  \norm{S_{2,\GG}}}\\ 
	\det \bm{\Sigma_{\vec{O'}}} &=
	\lp{\sum_{\GG \in \grp}\sigma_{\GG}^2\lp{\norm{T_{\GG}}  {+}  \norm{S_{12,\GG}} {+} \norm{S_{1,\GG}}  }} \lp{\sum_{\GG \in \grp}\sigma_{\GG}^2\lp{\norm{S_{12,\GG}} {+} \norm{S_{1,\GG}}   }}
	- \lp{\sum_{\GG \in \grp}\sigma_{\GG}^2\lp{ \norm{S_{12,\GG}}} }^2\\ 
	&=	
	\sum_{\GG \in \grp}\sigma_{\GG}^2 \norm{T_{\GG}} \lp{ \sum_{\GG \in \grp}\lp{\sigma_{\GG}^2\lp{\norm{S_{12,\GG}} {+} \norm{S_{2,\GG}}   }} } 
+	\sum_{\GG \in \grp}\sigma_{\GG}^2 \norm{S_{(12,\GG)}} \lp{ \sum_{\GG \in \grp}\lp{\sigma_{\GG}^2\lp{\norm{S_{1,\GG}} {+} \norm{S_{2,\GG}}   }} } \\ 
&\myquad[4]
 + \sum_{\GG \in \grp}\sigma_{\GG}^2\lp{  \norm{S_{1,\GG}}}	\sum_{\GG \in \grp}\sigma_{\GG}^2\lp{  \norm{S_{2,\GG}}}\\ 
\end{align*}
By comparing the equations, we obtain that the equality  $h(\vec{X}_T \mid O_1, O_2) = h(\vec{X}_T \mid O_1, O_2')$  is satisfiable if and only if $\sum_{\GG \in \grp}\lp{ \sigma_{\GG}^2 \norm{S_{12,\GG}}} = \sum_{\GG \in \grp}\lp{ \sigma_{\GG}^2 \norm{S_{1,\GG}}}$. However, this no longer implies that the optimal configuration is at 50\% overlap. For $\norm{\grp}>1$, there can be multiple solutions with respect to the individual group sizes, statistical parameters, and overlap percentages such that the equality can be satisfied.

\fi

\newcommand\threeExec{

\section{Three Executions and Beyond}
\label{sec:three_executions_and_beyond}

\ifCONF
In this section, we further generalize our analysis to three executions.
\else
The next logical step is to further generalize our analysis to three and any number $M$ executions. 
\fi

\subsection{Three Executions}
\label{sub:three_executions}

For three evaluations, there are additional possibilities for spectators to overlap between experiments. Specifically, we have:
\begin{itemize}
	\item spectators who participate in one experiment	($S_1, S_2, S_3$), 
	\item spectators who participate in two experiments, but not a third (${S_{12}}, {S_{13}}, {S_{32}}$), and
	\item spectators who participate in all three experiments $({S_{123}} )$.
\end{itemize}
Let $n$ be the (fixed) total number of spectators per experiment. For each evaluation, let superscript $(\tau_i)$ denote a target's participation flag defined as:
\begin{align*}
	\tau_i = \begin{cases}
		0 & \text{ $T$ does \emph{not} participate in evaluation $i$}\\
		1 & \text{ $T$ participates  in evaluation $i$}
	\end{cases}.
\end{align*}
We require $\summ{i = 1}{3} \tau_i > 0$ to signify that the target participates at least once. Therefore, there are $2^3 - 1 = 7$ possible target configurations. For example, $(\tau_1,\tau_2,\tau_3) = (1,0,1)$ means the target participated in the first and third executions. We use this notation to generate expressions for all configurations of the targets' participation in evaluations.
The random variables for each evaluation are:
\begin{gather*}
	O_1^{(\tau_1)} = \tau_1 \mdot X_T + X_{S_1} + X_{S_{12}}+ X_{S_{13}}  + X_{S_{123}}   = \tau_1 \mdot X_T +  X_{\hat{S}_1}\\
	O_2^{(\tau_2)} = \tau_2 \mdot X_T + X_{S_2} + X_{S_{12}}+ X_{S_{23}}  + X_{S_{123}}   = \tau_2 \mdot X_T + X_{\hat{S}_2} \\
	O_3^{(\tau_3)} = \tau_3 \mdot X_T + X_{S_2} + X_{S_{23}}+ X_{S_{13}}  + X_{S_{123}}  = \tau_3 \mdot X_T + X_{\hat{S}_3} ,
\end{gather*}
where $X_{\hat{S}_i}$ is the sum of all spectator configurations in evaluation $i$.
If we denote $p = \norm{P}$ as the size of an arbitrary group $P$ such that $X_P \sim \N(0, p\sigma^2)$, then covariance matrix for the random vector $\vec{O}_{1,2,3} = \lp{ O_1^{(\tau_1)}, O_2^{(\tau_2)}, O_3^{(\tau_3)} }^\tran$ is 
\begin{align*}
	\ifCONF  \else \fi 
	\bm{\Sigma}_{\vec{O}_{1,2,3}}  =
	\eqanchor
	\begingroup 
\setlength\arraycolsep{1pt}
	\begin{pmatrix}
\Cov{O_1^{(\tau_1)}, O_1^{(\tau_1)}} & \Cov{O_1^{(\tau_1)}, O_2^{(\tau_2)}} & \Cov{O_1^{(\tau_1)}, O_3^{(\tau_3)}} \\
\Cov{O_2^{(\tau_2)}, O_1^{(\tau_1)}} & \Cov{O_2^{(\tau_2)}, O_2^{(\tau_2)}} & \Cov{O_2^{(\tau_2)}, O_3^{(\tau_3)}} \\
\Cov{O_3^{(\tau_3)}, O_1^{(\tau_1)}} & \Cov{O_3^{(\tau_3)}, O_2^{(\tau_2)}} & \Cov{O_3^{(\tau_3)}, O_3^{(\tau_3)}}
\end{pmatrix} 
\endgroup
\ifCONF \breakcmd \else  \fi 
\begingroup 
\setlength\arraycolsep{1pt}
= \begin{pmatrix}
	\begin{pmatrix}	
	\tau_1\mdot{t} {+}  {s_{1}} {+} {s_{12}} \\
	 {+}{s_{13}} {+}{s_{123}} 
\end{pmatrix}
	&\tau_1\tau_2\mdot{t} {+} {s_{12}} {+}{s_{123}}  & \tau_1\tau_3\mdot{t} {+} {s_{13}} {+}{s_{123}} \\
	\tau_1\tau_2\mdot{t} {+} {s_{12}} {+}{s_{123}} & 
		\begin{pmatrix}	
	\tau_2\mdot{t} {+}  {s_{2}} {+} {s_{12}}\\ 
	 +{s_{23}} {+}{s_{123}}  
	\end{pmatrix}
	& \tau_2\tau_3\mdot{t} {+} {s_{23}} {+}{s_{123}}\\ 
	\tau_1\tau_3\mdot{t} {+} {s_{13}} {+}{s_{123}} &  \tau_2\tau_3\mdot{t} {+} {s_{23}} {+}{s_{123}}  & 
	\begin{pmatrix}	
		\tau_3\mdot{t} {+}  {s_{3}} {+} {s_{23}} \\ 
		+{s_{13}} {+}{s_{123}} 
	\end{pmatrix}
\end{pmatrix} \sigma^2
\endgroup
\ifCONF \breakcmd \else \breakcmd \fi 
= \begin{pmatrix}
	\tau_1\mdot{t} + n
	&\tau_1\tau_2\mdot{t} {+} {s_{12}} {+}{s_{123}}  & \tau_1\tau_3\mdot{t} {+} {s_{13}} {+}{s_{123}} \\
	\tau_1\tau_2\mdot{t} {+} {s_{12}} {+}{s_{123}} & 
	\tau_2\mdot{t} + n
	& \tau_2\tau_3\mdot{t} {+} {s_{23}} {+}{s_{123}}\\ 
	\tau_1\tau_3\mdot{t} {+} {s_{13}} {+}{s_{123}} &  \tau_2\tau_3\mdot{t} {+} {s_{23}} {+}{s_{123}}  & 
		\tau_3\mdot{t} + n
\end{pmatrix} \sigma^2.
\end{align*}
The second covariance matrix required is for the random vector $\vec{S}_{1,2,3} = \lp{ X_{\hat{S}_1}  , X_{\hat{S}_2}  , X_{\hat{S}_3} }^\tran$ and is given as
\begin{align*}
	\bm{\Sigma}_{\vec{S}_{1,2,3}}= \begin{pmatrix}
		n  &{s_{12}} {+}{s_{123}}  &  {s_{13}} {+}{s_{123}} \\
		{s_{12}} {+}{s_{123}} &n &   {s_{23}} {+}{s_{123}}\\ 
		{s_{13}} {+}{s_{123}} &   {s_{23}} {+}{s_{123}}  & n	   
	\end{pmatrix}\sigma^2.
\end{align*}
With these matrices, we are capable of computing the conditional entropy $h\lp{\vec{X}_T \mid O_1^{(\tau_1)}, O_2^{(\tau_2)}, O_3^{(\tau_3)}}$.
\ifCONF
Note
\else
It will be important later
\fi
that the above covariance matrices only depend on pairwise spectator overlaps between the executions $(s_{13} + s_{123})$, $(s_{12} + s_{123})$, and $(s_{23} + s_{123})$, rather than individual sets $s_{12}$, $s_{23}$, $s_{123}$, etc.

\ifCONF
Our results can be generalized to any $M$ executions, which we provide in 
 the full version of the paper.
\else

\subsection{$M$ Executions}
\label{sub:m_executions}
We can generalize  the prior section's analysis to obtain the target's conditional entropy for an arbitrary number of evaluations. Let $M$ be to the total number of evaluations where $M \in \mathbb{Z}_{>0}$. Denote $A$ as the set of integers from $1$ to $M$, such that $A = \lbr{1,\dots,M}$. We can generate the set of all subsets of spectators which overlap and do not overlap between evaluation using the power set of $A$ (denoted by $\P(A)$). Specifically
$\S = \P(A) \setminus \lbr{\emptyset}$,
the empty set is excluded as it corresponds to the target not participating in any computation. The number of spectator subsets and target participation configurations is $\norm{\S} = 2^{M} - 1$.  The output random variable of experiment $i\in \lbr{1,\dots,M}$ is therefore
\begin{align*}
	O_{i}^{(\tau_i)} = \tau_i \cdot X_T + \sum_{\stackrel{R \subseteq \S:}{i \in R}} X_{S_R} = \tau_i \cdot X_T + X_{\hat{S}_i} .
\end{align*}
We can generate elements of the $M \times M$ covariance matrix of the random vector $\vec{O}_{1,\dots,M} = \lp{ O_1^{(\tau_1)}, \dots, O_M^{(\tau_M)} }^\tran$ using the following expression for $i,j \in \lbr{1,\dots,M}$:
\begin{align*}
	\Cov{O_i^{(\tau_i)}, O_j^{(\tau_j)}} = \begin{cases}
		\tau_i\tau_j\mdot \sigma_{T}^2 +  \sum_{\stackrel{R \subseteq \S:}{(i,j) \in R}} \sigma_{R}^2 & \text{ if } i \neq j \\
		\tau_i\mdot \sigma_{T}^2 +  \sum_{\stackrel{R \subseteq \S:}{i \in R}} \sigma_{R}^2 & \text{ if } i = j 
	\end{cases}.
\end{align*}
Similarly, elements of the covariance matrix of the random vector $\vec{S}_{1,\dots,M} = \lp{ X_{\hat{S}_1}  ,\dots, X_{\hat{S}_M} }^\tran$ can be generated as follows:
\begin{align*}
	\Cov{X_{\hat{S}_i}, X_{\hat{S}_j}} = \begin{cases}
		  \sum_{\stackrel{R \subseteq \S:}{(i,j) \in R}} \sigma_{R}^2 & \text{ if } i \neq j \\
		\sum_{\stackrel{R \subseteq \S:}{i \in R}} \sigma_{R}^2 & \text{ if } i = j 
	\end{cases}.
\end{align*}
If the total number of spectators per evaluation is fixed to $n$, then $\sum_{\stackrel{R \subseteq \S:}{i \in R}} \sigma_{R}^2 = \sigma^2n$.
\fi

\subsection{Experimental Evaluation}
\label{sub:experimental_evaluation_3}
\begin{figure*}[t]
    \begin{small} \centering
			\centering
			
			\begin{subfigure}[t]{0.45\textwidth} \centering
				\includegraphics[width=0.88\textwidth]{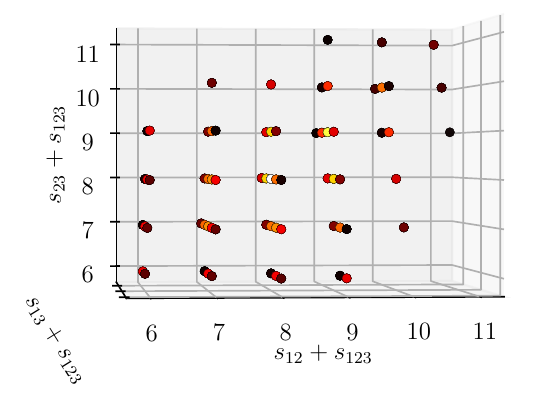}
				\caption{}
				\label{fig:3d_pairwise_1}
			\end{subfigure}
				\begin{subfigure}[t]{0.45\textwidth} \centering
					\includegraphics[width=0.625\textwidth]{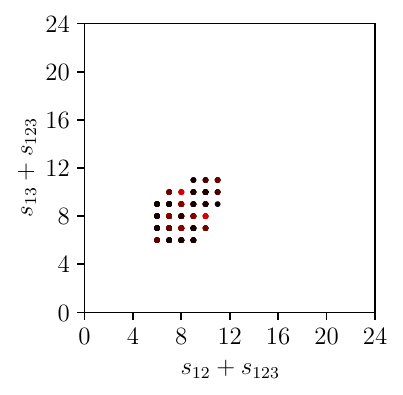}
					\caption{}
					\label{fig:3d_pariwise_projection}
				\end{subfigure}
				\begin{subfigure}[t]{0.45\textwidth} \centering
					\includegraphics[width=0.83\textwidth]{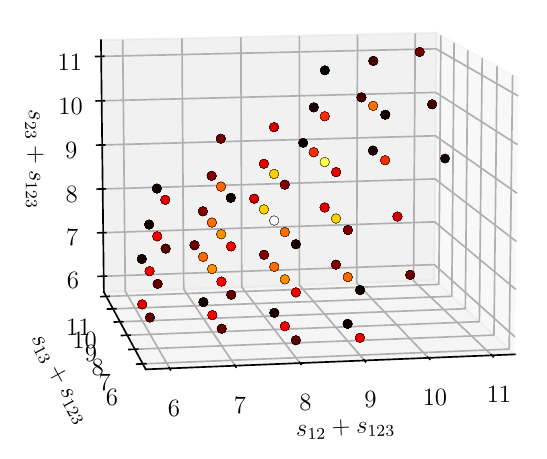}
					\caption{}
					\label{fig:3d_pairwise_3}
				\end{subfigure}
				\begin{subfigure}[t]{0.45\textwidth} \centering
					\includegraphics[width=0.94\textwidth]{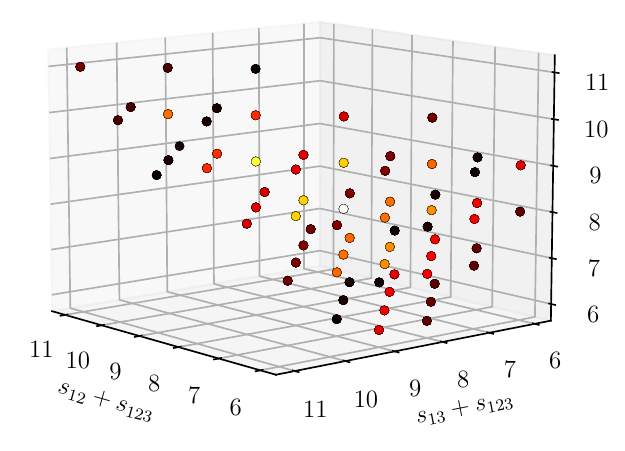}
					\caption{}
					\label{fig:3d_pairwise_4}
				\end{subfigure}
				\begin{subfigure}[t]{0.6\textwidth} \centering
					\includegraphics[width=0.9\textwidth]{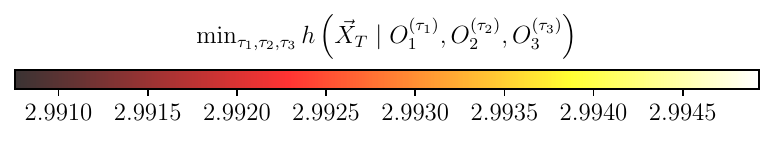}
					\label{fig:3d_pairwise_legend}
				\end{subfigure}
        \caption{Configurations and values of minimal information disclosure as functions of the pairwise spectator overlaps for three evaluations.}
        \label{fig:3d_pairwise}
    \end{small}
\end{figure*}

Unlike two executions, we can no longer graphically represent conditional entropy as a function of overlap sizes, as there are several dimensions to consider.  Instead, we enumerate all possible spectator configurations for $n = 24$ and for each spectator configuration we compute the minimum of the seven conditional entropies corresponding to valid target configurations $\tau_1,\tau_2,\tau_3$: 
\begin{align*}
	\min_{\tau_1,\tau_2,\tau_3}  h\lp{\vec{X}_T \mid O_1^{(\tau_1)}, O_2^{(\tau_2)}, O_3^{(\tau_3)}}.
\end{align*}
We then determine the maximums across all spectator configurations which correspond to the optimal choices that minimize target's information disclosure.

We plot the top 500 spectator configurations, which yield 20 unique entropy values (displayed in the color map), in Figure~\ref{fig:3d_pairwise} from different viewing angles. The axes correspond to pairwise overlap sizes, and a point with a fixed overlap, e.g., $s_{12} + s_{123}$, corresponds to different individual sizes of $s_{12}$ and $s_{123}$ that add to the same values. Recall that only the sum contributes to the entropy computation.

The maximum conditional entropy (singular white point) occurs when the pairwise overlaps are ${1}/{3}$ of $n$, i.e. when $s_{13} + s_{123} = s_{12} + s_{123} = s_{23} + s_{123} = 8$. Other top configurations are located nearby, but do not deviate from the center evenly.
In the projection of two of the three pairwise overlap dimensions ($s_{12} + s_{123} $ vs. $s_{13} + s_{123}$, Figure~\ref{fig:3d_pariwise_projection}), the top-500 configurations are concentrated in the $1/3$ overlap region. The shape is preserved (and thus the figures are identical) in the other two projections. It is important to point out that the difference in entropy between the largest and smallest value plotted is less than $1/100$th of a bit.

\begin{figure}[ht]
\centering
	\ifCONF
		\includegraphics[width=0.33\textwidth]{overlap_comparison.pdf}
		\else
		\includegraphics[width=0.4\textwidth]{overlap_comparison.pdf}
	\fi	
	  \caption{The optimal shared spectators overlap relative to the total number of participants $n$ for $M$ evaluations.}
	  \label{fig:overlap_comparison}
  \end{figure}

  \ifCONF 
  We generalize the optimal configurations to any number of experiments and spectators $n$. 
  \else 
Having examined optimal configurations for two and three executions, we want to generalize the findings to any number of experiments and spectators $n$. 
\fi
In Figure~\ref{fig:overlap_comparison}, we plot the optimal pairwise overlap percentages as a function of $n$ for 2, 3, and 4 executions. Information leakage is always the smallest when all pairwise overlaps are equal (i.e.,  for $M = 3$, $s_{13} + s_{123} = s_{12} + s_{123} = s_{23} + s_{123}$). The optimal overlap percentage for $M = 2$ is upper bounded by 50\% and tends towards 50\% as $n$ grows. Interestingly, the optimal overlap for both $M = 3$ and $M = 4$ trend toward $1/3$ overlap, while ideal overlaps are generally smaller for $M=4$. Analysis of large $M$, while potentially interesting, is of limited practical value.
  
}

\ifCONF 
\else 
\threeExec
\fi

\section{Conclusions and Recommendations}

In this work we study information disclosure associated with revealing the output of average salary computation on private inputs. Using the framework of~\cite{ah2017secure}, we analyze the function and derive several information-theoretic properties associated with the computation. Inputs are modeled using several discrete and continuous distributions, leading to multiple interesting conclusions about their entropy loss. We expand the scope to multiple executions on related inputs and determine the best configurations that minimize information disclosure. This leads to the following recommendations for computation designers: 
\begin{itemize}

\item The amount of information disclosure about a target is independent of adversarial inputs. It was also experimentally shown to be independent of distribution parameters for three different distributions and analytically shown for normal distribution. All examined distributions produce nearly identical entropy loss curves.

\item One can reduce the amount of entropy loss to a desired level by increasing the number of participants. For example, 
the computation designer can advertise at most 5\% or 1\% maximum entropy loss for the average salary application, which will require recruiting 6 or 25, respectively, non-adversarial participants when running only a single evaluation.

\item In the presence of repeated computations, information disclosure about inputs continues for both participants who stay and participants who leave. With two executions, protection is the largest with 50\% overlap in the participants, while both a small overlap and an overwhelming overlap result in undesirable information disclosure about different types of participants (i.e., those who stay vs. those who leave).

\item With more executions, pairwise overlaps sizes determine information disclosure. For 3 and 4 executions, optimal configurations have overlap sizes near 1/3 of the number of participants.

\item Information disclosure about participants' inputs can still be kept at a desirable level by enrolling enough participants and restricting percentage of reused inputs to be in a desired range. For example, with two executions and following the guidelines of the keeping the overlap near 50\%, the number of non-adversarial input contributors needs to be at least 8 to meet the target of 5\% information loss.
\end{itemize}

\begin{acks}
This work was supported in part by a Google Faculty Research Award and NSF grants 2112693 and 2213057. Any opinions, findings, and conclusions or recommendations expressed in this publication are those of the authors and do not necessarily reflect the views of the funding sources. 
\end{acks}
\bibliographystyle{ACM-Reference-Format}
\bibliography{refs}

\end{document}